\begin{document}
\title{A Set Automaton to Locate All Pattern Matches in a Term}

\author{
Rick Erkens
\\ \texttt{r.j.a.erkens@tue.nl}
\and
Jan Friso Groote
\\ \texttt{J.F.Groote@tue.nl} 
}









\theoremstyle{definition}
\newtheorem{theorem}{Theorem}[section]
\newtheorem{lemma}[theorem]{Lemma}
\newtheorem{corollary}[theorem]{Corollary}
\newtheorem{example}[theorem]{Example}
\newtheorem{definition}[theorem]{Definition}
\newtheorem{proposition}[theorem]{Proposition}

\maketitle

\begin{abstract}
Term pattern matching is the problem of finding all pattern matches in a subject term,
given a set of patterns.
Finding efficient algorithms for this problem is an important direction for research~\cite{har:termindexing}. 
We present a new set automaton solution for the term pattern matching problem
that is based on match set derivatives where each function symbol in the subject pattern is visited exactly once.
The algorithm allows for various traversal patterns over the subject term and 
is particularly suited to search the subject term in parallel. 

\end{abstract}

\section{Introduction}
Given a set of term patterns
and a subject term,
we are interested in the \textit{subterm matching problem}, which is to find all 
locations in the subject term where a pattern matches.
In term rewriting this corresponds to the act of finding all redexes.
Typically, the matching operation must be performed for many subject terms using the same pattern set,
which makes it desirable that matching is efficient. The costs of preprocessing the pattern set is less important as
it is only done once.

The subterm pattern matching problem should not be confused with
the \emph{root (pattern) matching problem}.
In the latter, only the matches at a specific position
in the subject term are needed.
There are many solutions to the root matching problem
that are designed to efficiently deal with sets of patterns \cite{har:termindexing}.
Moreover these solutions have been compared in a the practical setting of
theorem proving \cite{nieuwenhuis:evaluation}.
A solution for the root matching problem can be applied to solve the subterm matching problem
by applying it to every position in a subject term. But this solution can be expensive
as many function symbols in the subject term will be inspected multiple times.

In contrast to the root matching problem,
efficient solutions to the subterm matching problem
are generally restricted to only a single pattern,
and not to a set as is common in term rewriting.
More seriously, they avoid the use of an automaton construction and
process both the pattern and the subject term,
which is expensive if the matching problem needs to be solved for a huge number of subject terms.
Existing solutions for pattern sets are reductions from stringpath matching,
which requires the resulting stringpaths to be merged in order to yield a conclusive answer.
The algorithm that we propose is a mixture of an automaton and the match set approach.
It is explicitly formulated for an arbitrary number of patterns,
operates directly on the subject term in a top-down fashion,
and directly outputs pattern-position pairs instead of stringpath matches.

We present a solution using a so-called \textit{set automaton}.
In a set automaton intermediate
results are stored in a set and these stored results can be processed independently using the same
automaton. This is similar to a pushdown automaton where intermediate results
are stored on a stack to be processed at a later moment. 
A set automaton allows for massive parallel processing. This is interesting
given the prediction that the next boost in computing
comes from developing algorithms that are more parallel in nature \cite{Leisersoneaam9744}. 
 
Given a pattern set $\cL$, we construct a deterministic automaton
that prescribes a traversal of subject terms $t$. The automaton is executed at some
position $p$ in $t$, initially at the root. In each state a next transition is chosen
based on the function symbol $f$ in $t$ at a prescribed position, which is a sub-position of $p$. 
Every function symbol of $t$ is only inspected once.
Each transition is labelled with zero or more outputs of the form $\ell@p'$,
announcing a match of pattern $\ell$ at some position $p'$ in the subject term. 

Each transition ends in a set of next state/position pairs that must be 
processed further. In case the resulting set always consists of one single state/position,
the set automaton behaves as an ordinary automaton. 
The order in which the resulting state/position pairs
need to be processed is undetermined, hence the name \textit{set} automaton. 
In a sequential implementation a stack or queue could be
used to store these pairs giving depth-first or breadth-first strategies. But more interestingly,
the new state/position pairs can be taken up by independent processors, exploring 
the subject term $t$ in parallel. Note that also when running in parallel the algorithm
adheres to its main asset, namely that every function symbol of $t$ will only be inspected
once. 

The set automaton is generated by taking function symbol/position derivatives of match goal sets,
similar to how Brzozowski derivatives work for regular expressions~\cite{brzozowski:derivatives}.
The derivatives are partitioned into independent classes, giving rise to the set of next states. 
By shifting the match goal sets back, the relative displacement through the subject term is derived
allowing to calculate the position where the next state must be evaluated. This keeps the 
automaton finite.

The paper is organized as follows. After some preliminaries we informally discuss an example 
set automaton that matches associativity patterns in Section~\ref{sec:informal}.
Section~\ref{sec:construction} is dedicated to the set automaton construction.
In Sections~\ref{sec:validconstruction} we show that the construction is a well-defined and terminating procedure,
and in Section~\ref{sec:correctevaluation} we prove that the obtained set automaton is indeed a correct and
efficient solution to the subterm matching problem.
In Section~\ref{sec:complexity} we discuss the complexity of applying a set automaton and briefly discuss
some preliminary experiments on the size of set automata.
Lastly in Section~\ref{sec:futurework} we share our thoughts on future work.

\subsection{Related work}
Many solutions for the subterm pattern matching problem
focus on the time complexity or benchmarking of matching \emph{one} pattern against \emph{one} subject term.
See for example \cite{chauve:matching,cole:pattern,travnivcek:modification,dubiner:faster}.
These methods are typically inefficient if there is a large pattern \emph{set},
and the subject terms that need to be matched against the pattern set outnumber 
the subject term size and pattern size by orders of magnitude.
Especially in model checking tools that use term rewriting to manipulate data \cite{toolsetpaper,eker:maude},
the pattern set size is usually a fixed parameter whereas the amount of terms that need to be rewritten blows up
according to state space explosion.
A better solution is to preprocess the pattern set into an automaton-like data structure.
Even though the preprocessing step is usually expensive,
the size of the pattern set is removed as a parameter from the time complexity of the matching time.
This makes the subterm matching problem efficiently solvable against a vast number of subject terms.
To our knowledge, our approach is the first top-down solution of this kind, that achieves this efficiency.

A literature study on related solutions is found in the taxonomy of \cite{cleophas:taxonomies,cleophas:thesis}.
Hoffmann and O'Donnell \cite{hoffmann:matching} convert a pattern into a set of stringpaths,
after which they create an Aho-Corasick automaton \cite{aho:stringmatching} that accepts this set of stringpaths.
Cleophas, Hemerik and Zwaan report that this algorithm is closely related to
their algorithm, which constructs a tree automaton from a single pattern \cite{cleophas:related}.
In \cite{cleophas:thesis}, Algorithm~6.7.9, there is a version of this algorithm that supports multiple patterns.
The disadvantage of both approaches is that a subject term is scanned for matching stringpaths,
rather than term pattern matches.
In order to yield a conclusive answer to the term pattern matching problem,
it is required to keep track which stringpaths match for every pattern, at every position in the subject term.
Our set automata are built directly on the pattern set,
which allows us to output pattern-position pairs directly and avoid the postprocessing step of merging stringpath
matches.

Flouri et al. create a push-down automaton in \cite{flouri:template} from a single pattern.
This approach is very similar to the construction of our set automaton
in the sense that match-sets are used in the automaton construction.
This yields the same complexity as Hoffman and O'Donnell's bottom-up algorithm \cite{hoffmann:matching}.

The notation and the fact that set automaton states are labelled with positions,
have much in common with Adaptive Pattern Matching Automata \cite{sekar:adaptive},
which form a solution to the root pattern matching problem.

\section{Preliminaries}
A signature is a sequence
of disjoint, finite sets of function symbols $\bF_0$,$\bF_1$,\dots,$\bF_n$
where $\bF_i$ consists of function symbols of arity $i$.
We denote the arity of $f$ by $\ar f$.
The set of constants is $\bF_0$,
the entire signature is defined by $\bF=\bigcup_{i=0}^n \bF_i$ and
the set of non-constants is denoted by $\bF_{>0}=\bigcup_{i=1}^n \bF_i$
Let $\bT(\bF)$ be the set of terms over $\bF$, defined as the smallest set that contains the variable $\dc$,
every constant, and for all $f\in\bF_{>0}$,
whenever $t_1,\dots,t_{\#f}\in\bT(\bF)$, then also $f(t_1,\dots,t_{\#f})\in\bT(\bF)$.
The set of closed terms $\bT_C(\bF)$ is defined similarly, but without the clause $\dc\in\bT_C(\bF)$.
Since we only deal with linear patterns,
that is, patterns in which no variable occurs twice,
it is unnecessary to distinguish between the terms $f(x)$ and $f(y)$.
Therefore we only use one variable $\omega$.

A pattern over the signature $\bF$ is a term in $\bT(\bF)\setminus\{\dc\}$.
We use $\ell$ to range over patterns.
A pattern is typically the `left-hand side' of a rewrite rule.
Given a pattern $\ell=f(t_1,\dots,t_n)$, its head symbol is given by $\hd(\ell)=f$.
A pattern set is a finite, non-empty set of patterns.
Throughout this paper we use an arbitrary pattern set denoted by $\cL$.

A position is a list of positive natural numbers.
We use $\bP$ to denote the set of all positions
and we use $\epsilon$ to denote the empty list;
it is referred to as the \emph{root position}.i
Given two positions $p$, $q$ their concatenation is denoted by $p.q$.
The root position acts as a unit with respect to concatenation.

To alleviate the notation, we often denote a pair $(x,p)$ in some set $X\times\bP$ by $x@p$
so that the pair may be read as `$x$ at position $p$'.
The term domain function $\cD:\bT(\bF)\to\cP(\bP)$ maps a term to a set of positions.
That is, $\cD(\dc) = \{\epsilon\}$, for all $a\in\bF_0$ we have $\cD(a)=\{\epsilon\}$, and for all $f\in\bF_n$ with $n>0$
we have $\cD(f(t_1,\dots,t_n))=\{\epsilon\}\cup\bigcup_{i\leq n}\{i.p\mid p\in \cD(t_i)\}$.

Given a term $t$ and a position $p\in\cD(t)$,
the subterm of $t$ at position $p$ is denoted by $t[p]$.
A pattern $\ell$ matches term $t$ on position $p$ iff for all $p'\in\cD(\ell)$ such
that $\ell[p']\neq\dc$ we have that $\hd(t[p.p'])=\hd(\ell[p'])$.

Let $\sub(t)$ be the subpatterns of $t$, given by $\{t[p]\mid p\in\cD(t)\textrm{ and }t[p]\not=\omega\}$.
Since $\dc$ is not a pattern, it is excluded from this set on purpose.
We extend $\cD$ and $\sub$ to sets of terms by pointwise union.
That is, $\cD(\cL)=\bigcup_{\ell\in\cL}\cD(\ell)$, and similarly for $\sub$.

\section{An example set automaton}\label{sec:informal}
In this section we informally discuss the example set automaton in Figure~\ref{fig:associativity}.
It solves the term matching problem for the associativity patterns
$\ell_1=f(f(\dc,\dc),\dc)$ and $\ell_2=f(\dc,f(\dc,\dc))$.
We work in a setting with one binary function symbol $f$ and one constant $a$.

\begin{figure}
\centering
\includegraphics[scale=1.0]{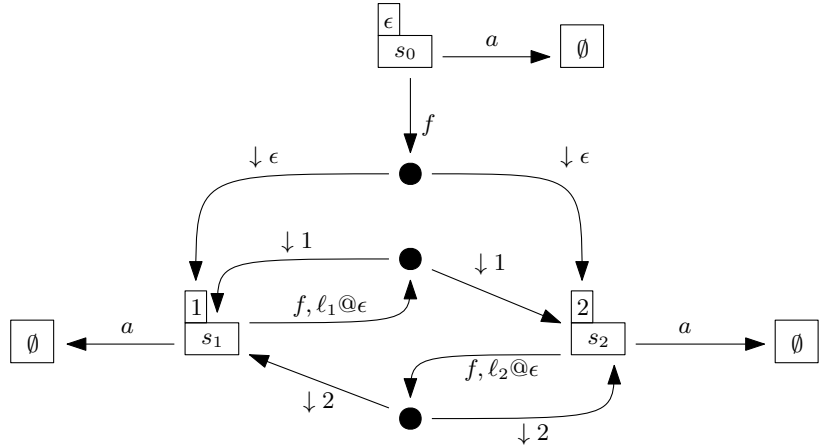}
\caption{A set automaton for the associativity patterns.}
\label{fig:associativity}
\end{figure}

We explain this automaton by applying it to the term $t=f(f(a,f(a,a)),a)$.
The evaluation is done in a semi-top-down fashion.
That is,
in order to inspect position $p.i$ we need to have inspected position $p$ before.
We execute the automaton given a state and a \emph{position pointer} $p$,
which is initially state $s_0$ at the root position.
The automaton tells us which position in $t$ to inspect, which pattern matches are given as an output at which positions,
and it tells at which state/position pairs the evaluation of the automaton must be continued. 

The initial state $s_0$
is labelled with the root position in the box on top of it.
This means that we have to inspect the function symbol in $t$ at position 
$\epsilon$ relative to the position pointer $p$.
Since the position pointer is initially $\epsilon$,
we inspect the head symbol at $t[\epsilon.\epsilon]$ which is $f$.
There are two $f$-transitions from state $s_0$ in the automaton, which have been depicted graphically as an
$f$-labelled arrow, going to a black dot with two outgoing arrows. If a match is found, the transition 
is labelled with $\ell@p'$ to indicate that pattern $\ell$ matches at position $p'$ relative to
the position pointer $p$.
In this case, no such label is present on the $f$-labelled transition.
Therefore no pattern match is reported.
Furthermore, the arrows from the black dots are labelled with a relative displacement $p''$ indicating
that the next state must be evaluated at position pointer $p.p''$.
In this case, the displacement annotation $\downarrow\epsilon$
prescribes that we continue the evaluation at position pointer $\epsilon.\epsilon$.
The two transitions for $f$ go to states $s_1$ and $s_2$ indicating that both states must
be evaluated independently at position $\epsilon$.
This can be done in parallel,
but for simplicity we do a sequential traversal and continue in state $s_1$.

We are in state $s_1$ and the position pointer is still $\epsilon$.
The state label of $s_1$ is $1$, so we look at position $1$ relative to the position pointer.
In term $t=f(f(a,f(a,a)),a)$ we observe $\hd(t[\epsilon.1])=f$,
so we take \emph{both} $f$-transitions from $s_1$.
The arrow labelled by $f$, is accompanied by the label $\ell_1@\epsilon$.
This means that we announce a match for pattern $\ell_1$ at position $\epsilon$
relative to the position pointer.
Since the position pointer is still $\epsilon$, we announce that $\ell_1$ matches $t$ at position $\epsilon$.
From the black dot there are two outgoing arrows with the label $\downarrow 1$.
This means that we continue in states $s_1$ and $s_2$ with the position pointer changed to
$\epsilon.1$. 

Continuing the evaluation in state $s_2$ at position pointer $1$,
we find the state label $2$ on top.
So, we inspect $t$ at position $2$ relative to the position pointer and find that
$\hd(t[1.2])=f$.
We again follow both outgoing $f$-transitions.
First we announce a match for pattern $\ell_2$ at position $\epsilon$ relative
to the position pointer, so we get that $t$ matches $\ell_2$ at position $1$.
Following the arrows from the bottom black dot,
we continue the evaluation in $s_1$ and $s_2$ with position pointer $1.2$.

Now the following state/position pairs still remain to be evaluated: 
$s_2$ at position pointer $\epsilon$,
$s_1$ at $1$, and
$s_1$ and $s_2$ both at position pointer $1.2$.
Inspecting $t$ at each position $p.L(s)$ where $p$ is the position pointer and $L(s)$ is the state label,
we find the constant $a$.
Following any $a$-transition,
the evaluation ends up in the final state, denoted by $\emptyset$, which means that no new
state/positions pairs need to be added for evaluation. 

The algorithm provides the following answer to the question
``at which positions do the patterns $\ell_1=f(f(\dc,\dc),\dc)$ and $\ell_2=f(\dc,f(\dc,\dc))$
match the term $t=f(f(a,f(a,a)),a)$?''. The pattern $\ell_1$ matches $t$ at the root position 
and $\ell_2$ matches $t$ at position $1$.
Observe that the algorithm inspected every position of $t$ exactly once.
The construction of the automaton guarantees this efficiency,
even though at every inspection occurrence of a symbol $f$ two independent 
evaluations of the automaton were started.

\section{Automaton construction}\label{sec:construction}
We describe how to create a set automaton based on \emph{position-/function symbol derivatives}.
To this end we first formally define the automaton,
and in particular, what kind of information should be encoded by states.

The sets of \emph{match obligations} $\MO$ and \emph{match announcements} $\MA$ are respectively defined by
\[\MO=\cP(\sub(\cL)\times\bP)\setminus\{\emptyset\}
\quad
\MA=\cL\times\bP
\,.
\]
A \emph{match goal} is a match obligation paired with a match announcement.
To limit the amount of parentheses, we often denote a match goal,
i.e.\ a pair in $\MO\times \MA$, by $\ell_1@p_1,\dots,\ell_n@p_n\to\ell@p$.
Such a match goal should be read as: ``in order to announce a match for pattern $\ell$ at position $p$,
we are obliged to observe the (sub)pattern $\ell_i$ on position $p_i$, for all $1\leq i\leq n$''.
We denote the positions of a match obligation $mo$ by $\pos(mo)$, defined by $\pos(mo)=\{p\in\bP\mid (t,p)\in mo\}$.

A set automaton for the pattern set $\cL$ is a tuple $(S,s_0,L,\delta,\out)$ where
\begin{itemize}
\item $S\subseteq\cP(\MO\times \MA)\setminus\{\emptyset\}$ is a finite set of states;
\item $s_0\in S$ is the initial state;
\item $L:S\to\bP$ is a state labelling function;
\item $\delta:S\times\bF\to\cP(S\times\bP)$ is a transition function;
\item $\out:S\times\bF\to\cP(\cL\times\bP)$ is an output function.
\end{itemize}
The empty set serves as a final state, but it has no outgoing transitions and no output.
Furthermore, a match goal of the form $\ell@p\to\ell@p$ is called \emph{fresh},
and a match goal of the form $mo\to\ell@\epsilon$ is called a \emph{root goal}.

\begin{example}\label{ex:bigboy}
Consider the pattern $\ell=f(f(\dc,g(\dc)),g(\dc))$.
Figure~\ref{fig:bigboy} is a set automaton for the singleton pattern set $\{\ell\}$.
It serves as a running example throughout this section and the next.
The state labels are given in the small boxes on the top left of every state,
and on the top right of every state there is an identifier.
We have $L(s_0)=\epsilon$ and $L(s_3)=1.2$.
Formally we have $\delta(s_3,f)=\{(s_0,1.1),(s_1,1.2)\}$,
which is depicted graphically as an $f$-labelled arrow going to the black dot,
with two outgoing position-labelled arrows to $s_0$ and $s_2$.
The only non-empty output set is $\out(s_3,g)=\{\ell@\epsilon\}$.
For all other state/symbol pairs $(s,h)$ we have $\out(s,h)=\emptyset$.
The final state $\emptyset$ has two incoming transitions.
For graphical purposes it is displayed twice.
\end{example}

\begin{figure}[h!]
\centering
\includegraphics[scale=0.8]{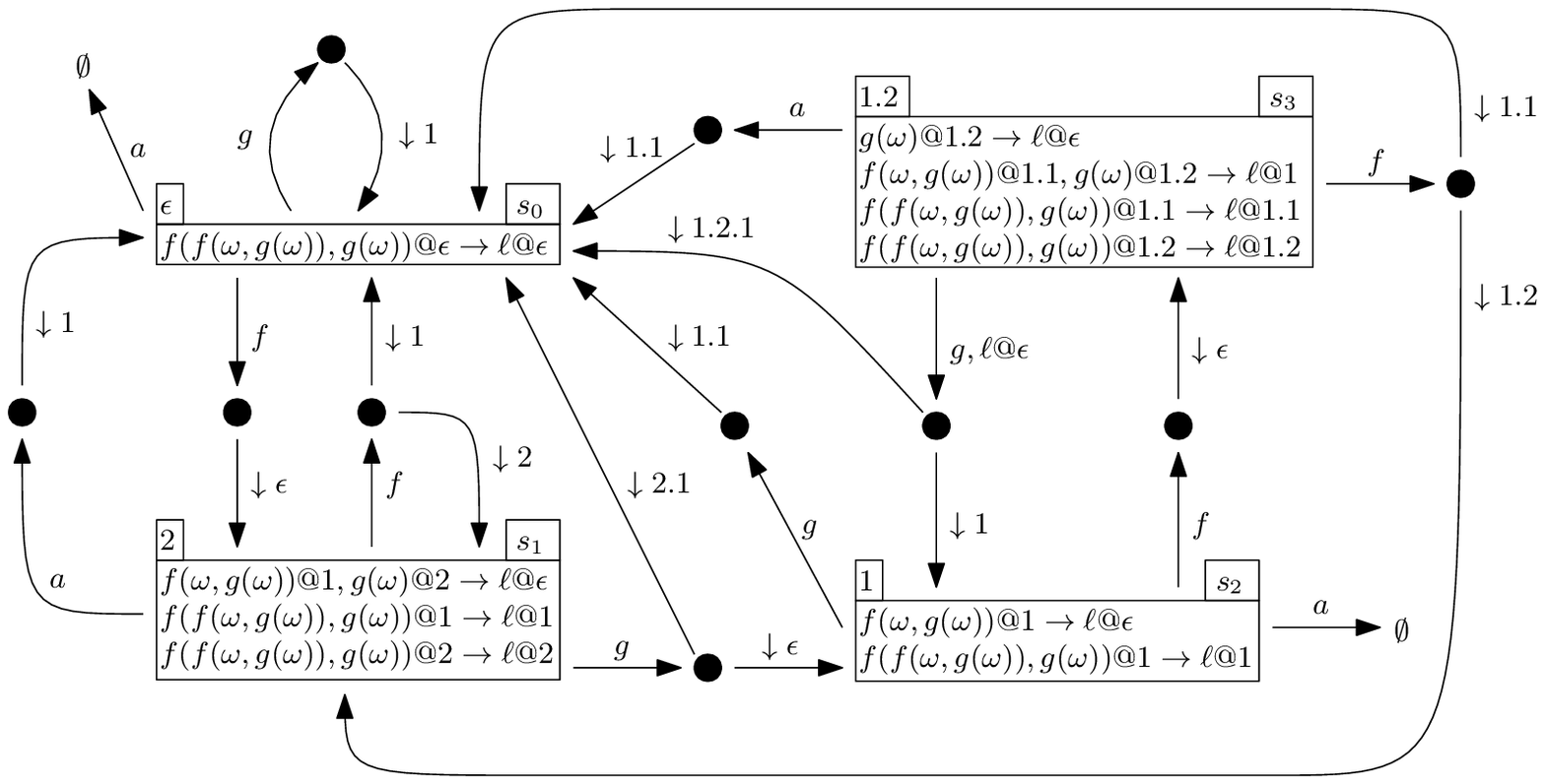}
\caption{A set automaton for $\ell=f(f(\dc,g(\dc)),g(\dc))$.}
\label{fig:bigboy}
\end{figure}

\subsection{Initial state}
Let $\cL$ be a pattern set.
We construct the automaton $M=(S,s_0,L,\delta,\out)$
by starting with the initial state.
It is labelled with the root position and
its match goals are all possible fresh root goals:
\[s_0=\{\ell@\epsilon\to \ell@\epsilon\mid\ell\in\cL\} \quad \text{and} \quad L(s_0)=\epsilon
\,.\]

\subsection{Function symbol-position derivatives}
To determine the transition relation,
we introduce function symbol-position derivatives.
This terminology is borrowed from Brzozowski derivatives of regular expressions~\cite{brzozowski:derivatives}.
From a state $s$ with $L(s)=p$, and a symbol $f$,
we determine the $f$-$p$-derivative of $s$
by computing the reduced match obligations of $s$
and adding the fresh match goal $\ell@p.i\to\ell@p.i$ for every argument $i$ of $f$ and every pattern $\ell\in\cL$.
Based on observing function symbol $f$ at position $p$, the match obligation $\ell_1@p_1,\dots,\ell_n@p_n$
can be altered in one of four ways.
\begin{itemize}
\item $p=p_1$, $n=1$ and $\ell_1=f(\dc,\dots,\dc)$.
Then $f@p$ is the last observation that was needed, so the obligation is fulfilled.
The match announcement paired with this obligation is presented as a pattern match.
\item $p=p_i$ for some $i$ and $\hd(\ell_i)\neq f$. Then $f@p$ contradicts with an expected observation, so the match obligation
is discarded.
\item $p\neq p_i$ for all $i$. Then $f@p$ is unrelated, so the obligation remains unchanged by this observation.
\item otherwise $p=p_i$ for some $i$ and $\hd(\ell_i)=f$,
but $f@p$ is only one of the many expected observations.
Then $\ell_i@p_i$ is removed and the arguments of $\ell_i$ are added as new match obligations.
\end{itemize}
Formally, the mapping $\reduce:\MO\times\bF\times\bP\to\MO\cup\{\emptyset\}$
alters the match obligation $mo$ after the observation $f@p$ by
\begin{align*}
\reduce(mo,f,p)=&
    \{\ell@q\in mo\mid q\neq p\} \ \cup \\
    & \{\ell[i]@p.i\mid\ell@p\in mo\wedge 1\leq i\leq\#f\wedge\ell[i]\neq\dc\}
    \,.
\end{align*}
Using the mapping $\reduce$, we can define the $f$-derivative of state $s$ by
\begin{align*}
\deriv(s,f)&=\mathit{unchanged} \cup \mathit{reduced} \cup \mathit{fresh}\,,\text{where} \\
\mathit{unchanged}&=\{mo\to ma\in s\mid L(s)\notin\pos(mo)\} \\
\mathit{reduced}&=
    \{\reduce(mo,f,L(s))\to ma\mid mo\to ma\in s \ \wedge \\
     & \hspace{2cm} \exists\ell[\ell @L(s)\in mo \wedge \hd(\ell)=f]
     \wedge \reduce(mo,f,L(s))\neq\emptyset\} \\
\mathit{fresh}&=\{\ell@L(s).i\to \ell@L(s).i\mid\ell\in\cL\wedge 1\leq i\leq\#f\}
\end{align*}

\begin{example}\label{ex:bigboyderivative}
Recall the pattern 
$\ell=f(f(\dc,g(\dc)),g(\dc))$
and the set automaton in Figure~\ref{fig:bigboy}.
Consider state $s_1$.
The parts of $\deriv(s_1,g)$ are computed as follows:
\begin{align*}
\mathit{unchanged}&=\{
f(f(\dc,g(\dc)),g(\dc))@1\to \ell@1
\} \\
\mathit{reduced}&=\{
f(\dc,g(\dc))@1\to\ell@\epsilon
\} \\
\mathit{fresh}&=
\{f(f(\dc,g(\dc)),g(\dc))@2.1\to\ell@2.1
\}\,.
\end{align*}
Note that the goal $f(f(\dc,g(\dc)),g(\dc))@2\to\ell@2$ disappears completely
since there is a mismatch with the expected symbol $g$ at position $2$.
\end{example}

\subsection{Derivative partitioning}
One application of $\deriv$ creates new match obligations with strictly lower positions.
Repeated application of $\deriv$ therefore results in an automaton with an infinite amount of states.
To solve this problem we take two more steps after computing the derivative.
First, we partition the derivative into independent equivalence classes.
Then, in every equivalence class, we lower the positions of all match goals as much as possible.
These two measures suffice to create a finite set automaton.

Note from Example~\ref{ex:bigboyderivative} that the derivative has two match obligations at position
$1$, and one match obligation at position $2.1$.
To obtain an efficient matching algorithm,
it is important that goals with overlapping positions stay together to obtain an efficient matching algorithm.
Conversely, sets of goals that are independent from each other can be separated to form a new state with fewer match goals.
When evaluating a set automaton this creates the possibility of exploring parts of the subject term independently.

Given a finite subset of match obligations $X\subseteq\MO$,
define the \emph{direct dependency relation} $R$ on $X$
for all $mo_1,mo_2\in X$ by $mo_1 \mathrel{R} mo_2$, iff $\pos(mo_1)\cap\pos(mo_2)\neq\emptyset$.
Note that $R$ is reflexive (since $\MO$ excludes the empty set) and symmetric.
But $R$ is not transitive, since for the obligations
\[mo_1=\{t_1@1\} \quad mo_2=\{t_1@1, t_2@2\} \quad mo_3=\{t_2@2\}
\]
we have $mo_1\mathrel{R} mo_2 \mathrel{R} mo_3$, but not $mo_1\mathrel{R} mo_3$.
Denote the \emph{dependency relation} on $X$ by $\sim_X$,
defined as the transitive closure of $R$.
Two match obligations are said to be \emph{dependent} iff $mo_1\sim_X mo_2$.
We extend $\dep_X$ to match goals by ${(mo_1\to ma_1)}\dep_X{(mo_2\to ma_2})$ iff $mo_1\dep_X mo_2$.
The subscript $X$ is mostly omitted if the set is clear from the context,
but note that it is necessary to define this relation separately on every state.
Defining it on the set of all match obligations will simply result in the full relation $\MO\times\MO$.

To determine the outgoing transitions
we partition $\deriv(s,f)$ into equivalence classes with respect to dependency $\dep$ on the match obligations.
Each equivalence class then corresponds to a new state.
The set of equivalence classes of the derivative is denoted by $[\deriv(s,f)]_{\dep}$.
We use the letter $K$ to range over equivalence classes.

\begin{example}\label{ex:bigboypartition}
Consider the computed $g$-derivative in Example~\ref{ex:bigboyderivative}.
Partitioning
yields
\begin{align*}
K_1&=
\{
f(f(\dc,g(\dc)),g(\dc))@1\to \ell@1
f(\dc,g(\dc))@1\to\ell@\epsilon
\}\\
K_2&=
\{
f(f(\dc,g(\dc)),g(\dc))@2.1\to\ell@2.1
\}
\end{align*}
\end{example}

\begin{example}\label{ex:bigboypartition2}
Consider the $f$-derivative of $s_2$, which is exactly $s_3$.
Note that the goals $g(\dc)@1.2\to\ell@\epsilon$ and $f(f(\dc,g(\dc)),g(\dc))@1.1\to\ell@1.1$ are not directly dependent,
but the goal $f(\dc,g(\dc))@1.1,g(\dc)@1.2\to\ell@1$ is directly dependent to both goals.
Therefore we obtain a singleton partition.
\end{example}

\subsection{Lifting the positions of classes}
Partitioning into smaller states is not enough to obtain a finite state machine since the positions
of match goals are increasing.
As the last part of the construction, we shorten the positions of every equivalence class.
This can be done due to the following observation.
Suppose that we are looking at term $t$ on position $\epsilon$.
If all match goals say something about position $1$ or lower,
we can remove the prefix $1$ everywhere,
and start to look at term $t$ from position $1$.
Inspecting position $1.p$ from the root is the same as inspecting $p$ from position $1$.

Let $\pos_{\MA}(K)$ denote the positions of the match announcements of $K$.
We want to `lift' every position in every goal of $K$ by the greatest common prefix of $\pos_{\MA}(K)$,
which we denote by $\gcp(\pos_{\MA}(K))$.
To ease the notation we write $\gcp(K)$ instead of $\gcp(\pos_{\MA}(K))$.
Since all positions in a state are of the form $\gcp(K).p'$,
we can replace them by $p'$.
Define $\lift(s)$ by $\lift(s)=\{(\lift(mo),\ell@p')\mid (mo,\ell@\gcp(s).p')\in s\}$
where $\lift(mo)=\{\ell@p'\mid \ell@\gcp(s).p'\in mo\}$.

This concludes the construction of the transition relation.
For a state $s$ and a function symbol $f$,
we fix $\delta(s,f)=\{(\lift(K),\gcp(K))\mid K\in[\deriv(s,f)]_{\dep}\}$.
Note that $\gcp(K)$ is also recorded in each transition since it tells us how to traverse the term.

\begin{example}
Continuing in Example \ref{ex:bigboypartition},
we compute the greatest common prefix and corresponding transition for the two equivalence classes.
For $K_1$ we have $\gcp(K_1)=\gcp(\{1,\epsilon\})=\epsilon$.
Then $\lift(K_1)=K_1=s_2$, and therefore $(s_2,\epsilon)\in\delta(s_1,g)$.
Class $K_2$ has one goal with $\gcp(K_2)=\gcp(\{2.1\})=2.1$.
Then $\lift(K_2)=\{f(f(\dc,g(\dc)),g(\dc))@\epsilon\to\ell@\epsilon\}$,
which yields the transition $(s_0,2.1)\in\delta(s_1,g)$.
\end{example}

\subsection{Output patterns}\label{sec:output}
The output patterns after an $f$-transition
are simply the match announcements that accompany the match obligations
that reduce to $\emptyset$:
\[\out(s,f)=\{ma\in \MA\mid f(\dc,\dots,\dc)@L(s)\to ma\in s\}\,.\]

\begin{example}
Consider state $s_3$ in Figure~\ref{fig:bigboy}.
The goal $g(\dc)@1.2\to\ell@\epsilon$ can be completed upon observing $g$ at position $1.2$,
so we fix $\out(s_3,g)=\{\ell@\epsilon\}$.
\end{example}

\subsection{Position labels}
For every state $s$ there must be a position label $L(s)$ in order to construct the transitions from $s$.
It makes sense to only choose a position from one of the match obligations.
We demand the extra constraint that this position should be part of a root match goal.
The construction guarantees that every state has a root goal,
which we prove in detail in the next section.
Similar to Adaptive Pattern Matching Automata~\cite{sekar:adaptive},
there might be multiple positions available to choose from.
Any of such positions can be chosen in the construction of the automaton,
but this position needs to be fixed when $s$ is created.

\subsection{Summary}
The following is a summary of the construction of the set automaton.
\begin{itemize}
\item $s_0=\{\ell@\epsilon\to \ell@\epsilon\mid\ell\in\cL\}$;

\item $\delta(s,f)=\{(\lift(K),\gcp(K))\mid K\in[\deriv(s.f)]_{\dep}\}$;

\item $\out(s,f)=\{ma\in \MA\mid f(\dc,\dots,\dc)@L(s)\to ma\in s\}$; and

\item $L(s)$ can be any $p\in\pos(mo)$ for some root match goal $mo\to\ell@\epsilon\in s$.
\end{itemize}

\section{Validity of the construction}\label{sec:validconstruction}
In order to see that the construction algorithm of the set automaton works
we need to know whether the following two properties hold.
Firstly, it is necessary that $L(s)$ is a position in the match obligation
of some root goal,
but it is not immediately clear that every state has a root goal.
Secondly, the algorithm needs to terminate.
In this section we show that these properties are valid. 

First we need some extra preliminaries.
In the previous section we used $\gcp(P)$
to denote the greatest common prefix in a set of positions.
This is a lattice construct that requires more elaboration to do proofs.

\begin{definition}[Position join-semilattice]
Position $p$ is said to be below position $q$, denoted by $p\leq q$, iff there is a position $q'$ such that
$p=q.q'$.
Position $p$ is strictly below $q$, denoted by $p<q$, if in addition $q'\neq\epsilon$.
This definition makes the structure $(\bP,\leq)$ a join-semilattice.
That is, $\leq$ is reflexive, transitive and antisymmetric, and
for each finite, non-empty set of positions $P$
there is a unique join $\bigvee P$, which satisfies
$p\leq \bigvee P$ for all $p\in P$ and whenever $p\leq r$ for all $p\in P$ then
also $\bigvee P\leq r$.
We call this join the greatest common prefix $\gcp(P)$.
We denote the join of two positions $p$ and $q$ by $p\vee q$.
Two positions $p,q$ are comparable if $p\leq q$ or $q\leq p$.
\end{definition}

\begin{proposition}\label{prop:positions}
The following properties hold for (sets of) positions.
\begin{itemize}
\item For all $p,q,r\in\bP$ we have $p.q\leq p.r \Leftrightarrow q\leq r$;

\item For all $p\in\bP$, for all $i\in\bN^+$ we have $p\not\leq p.i$;

\item For all $p,q,r\in\bP$, if $p\leq q$ and $p\leq r$ then $q$ and $r$ are comparable;

\item For all $p,q\in\bP$, if $p$ and $q$ are comparable then $p\vee q=p$ or $p\vee q=q$; and

\item For all finite $P,Q\subseteq\bP$ we have $\gcp(P\cup Q)=\gcp(P)\vee\gcp(Q)$.
\end{itemize}
\end{proposition}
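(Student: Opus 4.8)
The plan is to prove each of the five bullet points in turn, since they are largely independent facts about the position join-semilattice $(\bP,\leq)$ where positions are lists of positive naturals, $p\leq q$ means $q=p.q'$ for some $q'$, and $p\vee q=\gcp(\{p,q\})$ is the longest common prefix.

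\begin{proof}
We verify each claim separately, relying throughout on the fact that concatenation of positions is associative with $\epsilon$ as unit.

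\emph{Claim 1: $p.q\leq p.r \Leftrightarrow q\leq r$.}
For the backward direction, if $q\leq r$ then $r=q.s$ for some $s$, so $p.r=p.q.s$, hence $p.q\leq p.r$.
For the forward direction, if $p.q\leq p.r$ then $p.r=p.q.s$ for some $s$; since concatenation is left-cancellative on lists (comparing the two lists position-by-position from the front, the common prefix $p$ is stripped), we get $r=q.s$, i.e.\ $q\leq r$. The only point requiring care is left-cancellativity, which I would note follows by induction on the length of $p$.

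\emph{Claim 2: $p\not\leq p.i$.}
Suppose for contradiction $p.i\leq p$. Then $p=p.i.s$ for some $s$. By left-cancellativity (stripping the common prefix $p$) this gives $\epsilon=i.s$, which is impossible because $i\in\bN^+$ so $i.s$ is a nonempty list. Hence $p\not\leq p.i$. (I read the stated inequality $p\not\leq p.i$ as the assertion that $p.i$ is strictly below $p$ is false, equivalently that the proper extension $p.i$ is never a prefix of $p$; the argument above covers the substantive direction.)

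\emph{Claim 3: if $p\leq q$ and $p\leq r$ then $q,r$ are comparable.}
Write $q=p.q'$ and $r=p.r'$. Both $q'$ and $r'$ are lists; comparing them from the front, one must be a prefix of the other (a standard fact about the prefix order on lists). If $q'\leq r'$ then by Claim~1, $q=p.q'\leq p.r'=r$; symmetrically if $r'\leq q'$ then $r\leq q$. Either way $q$ and $r$ are comparable.

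\emph{Claim 4: if $p,q$ comparable then $p\vee q=p$ or $p\vee q=q$.}
By definition of comparability, say $p\leq q$. Then $p$ is a common prefix of $\{p,q\}$, and any common prefix $s$ satisfies $s\leq p$; so $p$ is the greatest common prefix, i.e.\ $p\vee q=p$. The symmetric case $q\leq p$ gives $p\vee q=q$.

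\emph{Claim 5: $\gcp(P\cup Q)=\gcp(P)\vee\gcp(Q)$.}
This is the universal property of the join. Let $a=\gcp(P)$, $b=\gcp(Q)$, $c=\gcp(P\cup Q)$. Since $c$ is a prefix of every element of $P\cup Q$, it is in particular a common prefix of $P$, so $c\leq a$; likewise $c\leq b$; hence $c\leq a\vee b$. Conversely $a\vee b$ is a prefix of $a$ and of $b$, and $a$ is a prefix of every element of $P$ while $b$ is a prefix of every element of $Q$, so $a\vee b$ is a common prefix of all of $P\cup Q$, giving $a\vee b\leq c$. By antisymmetry $c=a\vee b$.
\end{proof}

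The main obstacle is not any single claim but isolating the one genuinely load-bearing lemma, namely left-cancellativity of concatenation (if $p.q=p.r$ then $q=r$), which underlies Claims~1 and~2 and implicitly the ``one is a prefix of the other'' step in Claim~3. Everything else is a routine unfolding of definitions once that lemma is in hand. I would state and prove left-cancellativity once, by induction on $p$, and then invoke it wherever a common prefix must be stripped; the rest of the proposition then reduces to the elementary order-theoretic manipulations shown above.
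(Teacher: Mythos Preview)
The paper does not actually prove this proposition; it is stated as a collection of elementary facts about the position join-semilattice and left without justification. So there is no reference argument to compare against, and the question is simply whether your proof is correct.

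Your proof has one genuine gap, and it comes from a reversed definition. The paper defines $p\leq q$ by the existence of $q'$ with $p=q.q'$, i.e.\ $q$ is a \emph{prefix of} $p$ (deeper positions are below shallower ones, and $\epsilon$ is the top). You instead take $p\leq q$ to mean $q=p.q'$. For Claims~1, 2, 4 and~5 this reversal is harmless: those statements are either symmetric in form, disjunctive, or purely order-theoretic, so your arguments transport with at most cosmetic sign flips. Claim~3, however, does not survive. Under your orientation you write $q=p.q'$ and $r=p.r'$ and then assert that ``comparing them from the front, one must be a prefix of the other (a standard fact about the prefix order on lists).'' That assertion is false for arbitrary $q',r'$: take $q'=1$ and $r'=2$. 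Two \emph{extensions} of a common list need not be comparable. The intended argument, under the paper's orientation, is that $p=q.q'$ and $p=r.r'$ exhibit $q$ and $r$ as two \emph{prefixes of the same list $p$}, and any two prefixes of a fixed list are indeed prefix-comparable. Once you correct the direction of $\leq$, your left-cancellativity lemma and the remaining manipulations are fine.
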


Lastly, consider the straightforward notion of reachable state.
A state $s$ is reachable if there is a sequence of transitions to it from $s_0$.
That is, $s_0$ is reachable and
whenever $s$ is reachable and $(s',p)\in\delta(s,f)$,
then $s'$ is also reachable.
The following claims are useful in many places of the correctness proof.

\begin{proposition}\label{prop:matchobligations}
Let $s$ be a reachable state.
\begin{itemize}
\item For all goals $\ell_1@p_1,\dots,\ell_n@p_n\to \ell@p$ in $s$ we have that $p_i\leq p$ for all $i$.
\item For all distinct $p,q\in\pos_{\MO}(s)$ the positions $p$ and $q$ are incomparable.
\item For all distinct $p,q\in\pos_{\MO}(\deriv(s,f))$ the positions $p$ and $q$ are incomparable.
\end{itemize}
\end{proposition}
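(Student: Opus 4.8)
The plan is to prove each of the three bullets of Proposition~\ref{prop:matchobligations} in turn, with the main work being a single induction on reachability that establishes invariants strong enough to carry through a derivative step. The natural strategy is to prove the first two bullets \emph{simultaneously} by induction on the reachability of $s$, since the position structure of the match goals (bullet one) and the incomparability of distinct match-obligation positions (bullet two) reinforce each other, and then to derive the third bullet as a consequence of the machinery developed for the first two.

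First I would set up the induction. For the base case $s=s_0$, every goal is a fresh root goal $\ell@\epsilon\to\ell@\epsilon$, so trivially $p_i=\epsilon\leq\epsilon=p$ (bullet one), and $\pos_{\MO}(s_0)=\{\epsilon\}$ is a singleton, so bullet two holds vacuously. For the inductive step, I assume $s$ is reachable and satisfies both invariants, and consider an arbitrary $(s',p)\in\delta(s,f)$, i.e.\ $s'=\lift(K)$ for some $K\in[\deriv(s,f)]_{\dep}$ with $p=\gcp(K)$. The plan is to trace each of the three parts \emph{unchanged}, \emph{reduced}, \emph{fresh} of $\deriv(s,f)$ and check that bullet one is preserved, then that lifting by $\gcp(K)$ (which subtracts a common prefix from every position, using Proposition~\ref{prop:positions}) preserves it as well, since the relation $q\leq p$ is stable under removing a shared prefix from both sides by the first property of Proposition~\ref{prop:positions}.

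The heart of the argument is verifying bullet one for the \emph{reduced} goals. When $\reduce(mo,f,L(s))$ replaces $\ell@L(s)$ by the arguments $\ell[i]@L(s).i$, I must check that these new obligation positions still lie below the announcement position $p$ of the goal. By the inductive hypothesis $L(s)\leq p$, and $L(s).i$ is obtained by extending $L(s)$; the subtle point is that I need $L(s).i\leq p$, which is \emph{not} automatic from $L(s)\leq p$ alone. This is where I expect the main obstacle to lie: I will have to use the fact, guaranteed by the position-labelling rule, that $L(s)\in\pos(mo)$ for a \emph{root} goal, together with the structure of how obligations are generated, to argue that whenever an obligation position $L(s)$ is genuinely reduced the announcement position $p$ properly dominates all the freshly spawned children $L(s).i$. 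For the \emph{fresh} goals $\ell@L(s).i\to\ell@L(s).i$ bullet one is immediate since obligation and announcement coincide, and for \emph{unchanged} goals it is exactly the inductive hypothesis.

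For bullet two, after establishing bullet one I would argue about $\pos_{\MO}(\deriv(s,f))$: the unchanged obligations keep their (pairwise incomparable) positions, the reduced obligations trade the single position $L(s)$ for the children $L(s).i$, and the fresh obligations introduce positions $L(s).i$; the key observations are that the children $L(s).i$ and $L(s).j$ are incomparable for $i\neq j$ (second property of Proposition~\ref{prop:positions}, $p\not\leq p.i$, combined with the first property), that each child is strictly below the old position $L(s)$, hence incomparable with any position that was incomparable with $L(s)$, and that positions surviving from \emph{unchanged} goals that equal $L(s)$ cannot occur because the label $L(s)$ was consumed. This directly yields the third bullet, since $\pos_{\MO}(\deriv(s,f))$ having pairwise incomparable distinct positions \emph{is} the third claim; and then bullet two for the successor state $s'$ follows because passing to an equivalence class $K$ and lifting by $\gcp(K)$ only removes positions and strips a common prefix, neither of which can turn an incomparable pair into a comparable one by Proposition~\ref{prop:positions}.
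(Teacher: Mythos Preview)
The paper states Proposition~\ref{prop:matchobligations} without proof, so there is no argument in the paper to compare against. Your inductive approach on reachability, treating the three constituents \emph{unchanged}, \emph{reduced}, \emph{fresh} of $\deriv(s,f)$ separately and then showing that restriction to an equivalence class $K$ and lifting by $\gcp(K)$ preserve both invariants, is the natural argument and is correct.

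One remark: the step you flag as the ``main obstacle'' --- that $L(s).i\leq p$ is allegedly not automatic from $L(s)\leq p$ --- is in fact immediate. By definition $L(s)\leq p$ means $L(s)=p.r$ for some $r$, hence $L(s).i=p.(r.i)$ and $L(s).i\leq p$. Extending a position can only push it deeper in the $\leq$ ordering. So bullet one for reduced goals follows directly from the inductive hypothesis, and there is no need to invoke the root-goal property of $L(s)$ here. The genuine work is in bullets two and three, where your outline (the children $L(s).i$ are pairwise incomparable, and any position comparable with some $L(s).i$ must already be comparable with $L(s)$, contradicting the inductive hypothesis for the surviving old positions) is correct; lifting then preserves incomparability by the first item of Proposition~\ref{prop:positions}.
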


First, we show that every reachable state always has an available root goal.
By definition of the transition function,
the positions of all match goals in a class $K$ get shortened by $\gcp(K)$
after partitioning.
The partitioning allows us to
show that $\gcp(K)$ is always in $\pos_{\MA}(K)$.

\begin{restatable}{lemma}{rootgoal}\label{lem:rootgoal}
Let $s$ be a reachable state.
Then for all $f\in\bF$,
if $K\in[\deriv(s,f)]_{\dep}$
then there is a goal $mo\to\ell@\gcp(K)$ in $K$.
\end{restatable}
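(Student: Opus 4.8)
The plan is to use the fact that a $\dep$-class is exactly a connected component of the direct-dependency graph, together with the observation that in every goal the announcement position lies \emph{above} all of the obligation positions. Connectivity will then force all announcements in $K$ into a single subtree hanging off $\gcp(K)$, and that is possible only when $\gcp(K)$ itself occurs as an announcement position.

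First I would record the structural ingredient: every goal $mo\to\ell@p$ occurring in $\deriv(s,f)$ satisfies $q\leq p$ for all $q\in\pos(mo)$, i.e.\ the announcement position $p$ is a prefix of each of its obligation positions. For the goals that $\deriv$ copies unchanged this is the first item of Proposition~\ref{prop:matchobligations}; for the $\mathit{reduced}$ goals it follows because $\reduce$ either keeps a position $q\neq L(s)$ that was already below $p$, or replaces the obligation position $L(s)$ (which satisfies $L(s)\leq p$) by a child $L(s).i$, and $L(s)\leq p$ yields $L(s).i\leq p$; for the $\mathit{fresh}$ goals obligation and announcement positions coincide. This is a routine check over the three parts of $\deriv$.

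Now write $a^\ast=\gcp(K)=\gcp(\pos_{\MA}(K))$ and argue by contradiction, assuming $a^\ast\notin\pos_{\MA}(K)$. Every announcement position $a$ of a goal in $K$ satisfies $a\leq a^\ast$ by the join property, and $a\neq a^\ast$ by assumption, so $a<a^\ast$; hence each goal of $K$ has a well-defined first coordinate below $a^\ast$, which I call its \emph{branch}. If all goals shared a single branch $i$, then $a^\ast.i$ would be an upper bound of $\pos_{\MA}(K)$, forcing $a^\ast\leq a^\ast.i$ and contradicting $a^\ast\not\leq a^\ast.i$ (Proposition~\ref{prop:positions}); so at least two distinct branches occur in $K$. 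By the structural ingredient, any obligation position $q$ of a goal with announcement $a$ extends $a$ and so lies in the same branch, so two goals in different branches have their obligation positions in disjoint subtrees and therefore cannot be directly dependent. Since $R$ preserves branch membership, so does its transitive closure $\dep$; hence a single $\dep$-class cannot meet two branches, contradicting the previous sentence.

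I expect the one delicate point to be that the ``announcement-above-obligations'' property is needed for the goals of $\deriv(s,f)$ rather than only for the goals of a reachable state; once the short case check in the second paragraph secures it, the branch/connectivity argument is immediate. I would finish by noting that the contradiction forces $a^\ast\in\pos_{\MA}(K)$, i.e.\ $K$ contains a goal $mo\to\ell@\gcp(K)$, as claimed.
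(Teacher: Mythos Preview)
Your argument is correct, and it takes a genuinely different route from the paper's proof. The paper argues by induction on $|K|$: it splits $K$ into two nonempty sub-classes $K_1,K_2$ linked by a direct dependency, applies the induction hypothesis to obtain goals with announcement positions $\gcp(K_1)$ and $\gcp(K_2)$, and then uses the shared obligation position to show these two joins are comparable, so that $\gcp(K)$ equals one of them. Your proof avoids induction entirely: you observe that if $\gcp(K)$ were not an announcement position then every goal would live strictly inside one child subtree of $\gcp(K)$, and since obligation positions always lie below their announcement (your ``structural ingredient''), the direct-dependency relation cannot cross subtrees; hence $K$ would be confined to a single branch, contradicting the defining property of $\gcp(K)$. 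Both proofs rest on the same key fact---that in every goal of $\deriv(s,f)$ the announcement position is a prefix of every obligation position---and you are right to flag that this must be checked for $\deriv(s,f)$, not merely for $s$; the paper invokes Proposition~\ref{prop:matchobligations} at that point without spelling out the extension. Your branch/connectivity argument is a bit more direct and makes the combinatorics more transparent; the paper's inductive decomposition is closer in spirit to how one might later reason about the lifted classes, but for this lemma either approach works cleanly.
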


The details of the proof can be found in the appendix;
we give a sketch here.
The proof is by induction on the size of $K$.
The base case is trivial,
and if $|K|\geq 2$ then $K$ can be split into two non-empty classes
with a dependency between them.
By using Propositions~\ref{prop:positions} and~\ref{prop:matchobligations},
and the induction hypothesis
we can show that one of the two smaller classes has
a goal of the right form.

\begin{corollary}
Every reachable state has a root goal.
\end{corollary}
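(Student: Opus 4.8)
The plan is to prove the statement by induction on the definition of reachability, using Lemma~\ref{lem:rootgoal} to discharge the inductive step. Recall that a root goal is a match goal whose announcement sits at the root position, i.e.\ one of the form $mo\to\ell@\epsilon$, so the task is to exhibit in every reachable state at least one goal of this shape.

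For the base case, the initial state is $s_0=\{\ell@\epsilon\to\ell@\epsilon\mid\ell\in\cL\}$. Every goal in $s_0$ already has its announcement at $\epsilon$, so $s_0$ trivially contains a root goal (indeed all of its goals are root goals), and this disposes of the base case. For the inductive step, suppose $s$ is reachable and $s'$ is a successor, so that $(s',p)\in\delta(s,f)$ for some $f\in\bF$. Unfolding $\delta(s,f)=\{(\lift(K),\gcp(K))\mid K\in[\deriv(s,f)]_{\dep}\}$, there is an equivalence class $K\in[\deriv(s,f)]_{\dep}$ with $s'=\lift(K)$. Applying Lemma~\ref{lem:rootgoal} to the reachable state $s$ and the symbol $f$ produces a goal of the form $mo\to\ell@\gcp(K)$ inside $K$; that is, $K$ contains a goal whose announcement position is exactly $\gcp(K)$.

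The crux is then to read off what lifting does to this distinguished goal. By definition $\lift(K)$ replaces each announcement position of the form $\gcp(K).p'$ by $p'$, and since the goal $mo\to\ell@\gcp(K)$ has announcement position $\gcp(K)=\gcp(K).\epsilon$, lifting sends it to the goal $\lift(mo)\to\ell@\epsilon$. Hence $s'=\lift(K)$ contains a root goal, which completes the induction.

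I do not expect a genuine obstacle here, since all of the real difficulty is already packaged inside Lemma~\ref{lem:rootgoal}; what remains is only to unfold the definitions of $\delta$ and $\lift$ and to observe that $\epsilon$ is the neutral element for the relative displacement $\gcp(K)$. The single point worth stating carefully is that lifting is carried out with precisely the prefix $\gcp(K)$, so that the announcement originally at $\gcp(K)$ is the one lowered all the way down to $\epsilon$, guaranteeing a root goal in the successor state.
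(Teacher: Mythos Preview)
Your argument is correct and is precisely the intended route: the paper states the corollary immediately after Lemma~\ref{lem:rootgoal} without further proof, and your induction on reachability---checking $s_0$ directly and using the lemma together with the definition of $\lift$ for the inductive step---is exactly what makes the corollary immediate. The one observation you spell out that the paper leaves implicit is that lifting the goal $mo\to\ell@\gcp(K)$ yields a goal with announcement position $\epsilon$, which is the heart of the matter.
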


Next, we show that the construction terminates.
There are two key observations to termination.
Firstly, the $\lift$ operation always shortens the positions
of derivative partitions with respect to $\leq$.
Secondly, every state label is a match obligation position of some root goal in that state.
This allows us to prove that reachable states can only have match positions
in some finite set.

\begin{restatable}{lemma}{positionbound}\label{lem:positionbound}
Let $N$ be the largest arity of any function symbol in $\bF$,
and define the set of reachable positions by
$\cR=\{p\in\bP\mid \exists q,r,i: q\in\cD(\cL) \wedge r\in\bP \wedge 1\leq i\leq N \wedge r.p=q.i\}$.
Then for all reachable states $s$ we have that $\pos_{\MO}(s)\subseteq\cR$.
\end{restatable}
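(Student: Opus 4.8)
The plan is to prove the statement by induction on the reachability of $s$, after first pinning down the precise shape of the match obligations that can occur in a reachable state. The engine of the whole argument is the observation that $\cR$ is \emph{closed under taking suffixes}: if $p\in\cR$, witnessed by $q,r,i$ with $r.p=q.i$, and $p=c.p'$, then $r.c.p'=q.i$ witnesses $p'\in\cR$. Hence once a position lands in $\cR$, every position obtained from it by deleting a prefix stays in $\cR$, which is exactly what the $\lift$ operation does.

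First I would establish a structural invariant describing obligations in reachable states: for every reachable $s$ and every goal $mo\to\ell@p$ in $s$, each obligation $t@p'\in mo$ satisfies $p'=p.w$ and $t=\ell[w]\neq\dc$ for some $w\in\cD(\ell)$. This is proved by a separate induction on reachability. The base case $s_0$ is immediate, since every goal is $\ell@\epsilon\to\ell@\epsilon$ with $w=\epsilon$. For the step one checks the three parts of $\deriv(s,f)$: the $\mathit{unchanged}$ goals keep both $p$ and their obligations; the $\mathit{fresh}$ goals $\ell@L(s).i\to\ell@L(s).i$ again have $w=\epsilon$; and for a $\mathit{reduced}$ goal, replacing the single obligation $\ell[w_0]@p.w_0$ (where $p.w_0=L(s)$ and $\hd(\ell[w_0])=f$) by its arguments $\ell[w_0.j]@p.w_0.j$ preserves the shape with $w=w_0.j\in\cD(\ell)$. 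Finally $\lift$ deletes the common prefix $\gcp(K)$ from both $p$ and $p'$; since $p'=p.w$, this relation is retained after the deletion. The payoff of this invariant is its specialization to root goals: a root goal $mo\to\ell@\epsilon$ has all its obligation positions equal to $w\in\cD(\ell)$, hence contained in $\cD(\cL)$. Because the construction fixes $L(s)$ to be an obligation position of some root goal, whose existence is the corollary to Lemma~\ref{lem:rootgoal}, this yields $L(s)\in\cD(\cL)$ for every reachable $s$.

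With $L(s)\in\cD(\cL)$ in hand, the main induction is short. I read off from the definition of $\deriv$ that $\pos_{\MO}(\deriv(s,f))\subseteq\pos_{\MO}(s)\cup\{L(s).i\mid 1\leq i\leq N\}$: the $\mathit{unchanged}$ part contributes positions already in $\pos_{\MO}(s)$, while $\reduce$ and the $\mathit{fresh}$ goals only add positions of the form $L(s).i$ with $i\leq\#f\leq N$. Next, for any class $K\in[\deriv(s,f)]_{\dep}$ the transition target is $\lift(K)$, whose positions are exactly the tails $p'$ of positions $\gcp(K).p'\in\pos_{\MO}(\deriv(s,f))$; here I use Proposition~\ref{prop:matchobligations} (obligation positions lie below their announcement) together with the fact that $\gcp(K)$ is by definition a common prefix of the announcement positions of $K$, to guarantee that $\gcp(K)$ is a common prefix of \emph{every} position in $K$, so that $\lift$ is well defined and merely deletes prefixes. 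Combining these, every position of a successor state is a suffix of a position lying in $\pos_{\MO}(s)\cup\{L(s).i\mid i\leq N\}$. By the induction hypothesis $\pos_{\MO}(s)\subseteq\cR$, and since $L(s)\in\cD(\cL)$ each $L(s).i$ is literally of the form $q.i$ with $q\in\cD(\cL)$ and therefore lies in $\cR$; suffix-closure then keeps everything inside $\cR$. The base case is $\pos_{\MO}(s_0)=\{\epsilon\}\subseteq\cR$, which holds as soon as the signature contains at least one non-constant symbol, i.e.\ $N\geq1$ (the all-constants case is degenerate and can be set aside).

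I expect the only real obstacle to be the structural invariant and, through it, the claim $L(s)\in\cD(\cL)$. It is tempting to argue merely that $L(s)\in\pos_{\MO}(s)\subseteq\cR$ and push on, but this is insufficient: $\cR$ contains suffixes of pattern positions that are not themselves pattern positions (for instance $2.1$ for the pattern $f(f(a,a),a)$), and appending an index to such a stray position can leave $\cR$. It is precisely the structural invariant that forbids $L(s)$ from being such a position, so the care in tracking obligations through $\reduce$ and especially through the prefix-deleting $\lift$ is where the work lies; the remaining bookkeeping with suffixes is routine.
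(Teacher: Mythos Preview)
Your proof is correct and shares the paper's skeleton: induction on reachability, suffix-closure of $\cR$ to absorb the $\lift$ step, and the observation that $\pos_{\MO}(\deriv(s,f))\subseteq\pos_{\MO}(s)\cup\{L(s).i\mid i\leq N\}$, with the crucial ingredient being $L(s)\in\cD(\cL)$. The one substantive difference is how that last fact is obtained. The paper dispatches it in a single line (``By construction, $L(s)$ is the position of a root goal in $s$. Therefore, $L(s)\in\cD(\cL)$.''), whereas you prove it via an explicit structural invariant on match goals (every obligation in $mo\to\ell@p$ sits at a position $p.w$ with $w\in\cD(\ell)$). Your route is the more honest one: the paper's one-liner is not literally immediate from the definitions of $\MO$ and the transition function, and really does rest on exactly the invariant you isolate, specialized to root goals. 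So what you add is a proof of a step the paper leaves implicit; the remaining bookkeeping is identical. You also correctly flag the degenerate $N=0$ base case, which the paper's ``easily satisfies'' passes over.
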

The proof can be found in the appendix.
Intuitively, since there are only finitely many state labels,
the longest position in any match obligation is of the form $L(s).i$
where $i$ is bounded by $N$.

\begin{corollary}
There are finitely many reachable states.
\end{corollary}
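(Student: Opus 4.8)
The plan is to derive the corollary directly from Lemma~\ref{lem:positionbound} by exhibiting one fixed finite collection of possible match goals, and showing that every reachable state is a subset of it. First I would check that the set $\cR$ occurring in Lemma~\ref{lem:positionbound} is itself finite. Since $\cL$ is a finite pattern set and each pattern has a finite domain, $\cD(\cL)$ is finite; together with the fact that $N$ is finite, the set $\{q.i\mid q\in\cD(\cL),\,1\leq i\leq N\}$ is a finite set of positions. Every $p\in\cR$ satisfies $r.p=q.i$ for some such $q.i$, so $p$ is a suffix of a position in this finite set, and a position of length $k$ has only $k+1$ suffixes. Hence $\cR$ is finite.

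Next I would bound all the data that can appear in a single match goal $\ell_1@p_1,\dots,\ell_n@p_n\to\ell@p$ of a reachable state $s$. The subpatterns $\ell_i$ range over the finite set $\sub(\cL)$ and the announced pattern $\ell$ ranges over the finite set $\cL$. By Lemma~\ref{lem:positionbound} the obligation positions satisfy $p_i\in\pos_{\MO}(s)\subseteq\cR$, so they are drawn from the finite set $\cR$. It remains to bound the announcement position $p$. Because $\MO$ excludes the empty set, the match obligation is non-empty, so it contains at least one pair $\ell_i@p_i$, and Proposition~\ref{prop:matchobligations} gives $p_i\leq p$, which means $p$ is a prefix of $p_i\in\cR$. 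The set of prefixes of a finite set of positions is again finite, so $p$ also ranges over a finite set.

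With these bounds in hand, every match goal that can occur in a reachable state belongs to one fixed finite set $G$: its match obligation is a subset of the finite set $\sub(\cL)\times\cR$, and its match announcement lies in $\cL$ paired with the finite prefix-set just described, so there are only finitely many such goals. A reachable state is then a subset of $G$, and a finite set has only finitely many subsets; therefore there are finitely many reachable states.

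The argument is essentially bookkeeping once Lemma~\ref{lem:positionbound} is available, and I expect the only step needing genuine care to be the bound on the announcement positions. A state is determined by its full match goals, not merely by the positions occurring in its obligations, so bounding $\pos_{\MO}(s)$ is not by itself enough; the extra ingredients are the non-emptiness of match obligations (which guarantees an obligation position to anchor the announcement position) and the inequality $p_i\leq p$ from Proposition~\ref{prop:matchobligations}. Everything else reduces to the finiteness of $\cR$, $\sub(\cL)$, and $\cL$.
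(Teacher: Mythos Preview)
Your argument is correct and is precisely the kind of bookkeeping the paper intends when it states this as a corollary without proof; you have spelled out exactly why finiteness of $\cR$ (via Lemma~\ref{lem:positionbound}) together with Proposition~\ref{prop:matchobligations} forces every match goal in a reachable state to lie in a fixed finite set. The one step the paper leaves entirely implicit --- bounding the announcement position by observing that it must be a prefix of some obligation position in $\cR$ --- is the only non-obvious point, and you handle it correctly.
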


\section{Correctness of the evaluation}\label{sec:correctevaluation}
The informal evaluation that was discussed in Section~\ref{sec:informal} describes how to apply an automaton $M$ to a
subject term.
Formally this procedure can be defined by the mapping
$\eval_M:S\times\bP\times\bT(\bF)\to\cP(\cL\times\bP)$ given by
\[\eval_M(s,p,t)=\{\ell@p.q\mid \ell@q\in\out(s,f)\}\cup\bigcup_{(s',p')\in\delta(s,f)}\eval(s',p.p',t)\]
where $f=\hd(t[p.L(s)])$.
Finding all pattern matches in a term $t$ is the invocation of $\eval_M(s_0,\epsilon,t)$.
The desired correctness property can then be stated as follows:
\[
\eval_M(s_0,\epsilon,t)=\{\ell@p\in\cL\times\bP\mid\text{$\ell$ matches $t$ at $p$}\}
\,.
\]
This property cannot be shown by a straightforward structural induction on $t$.
In this section we take a detour and prove an equivalent correctness claim.
The proof is sketched as follows.
First, we add explicit structure to the evaluation by computing an evaluation tree $ET_M(t)$ of a term $t$.
We prove a one-to-one correspondence between the nodes of $ET_M(t)$ and $t$.
It follows that this method of pattern matching is efficient
in the sense that every position of $t$ is inspected exactly once.
Soundness and completeness is shown at the end of the section.

\subsection{Evaluation trees}
\begin{definition}
An \emph{evaluation tree} for an automaton $M=(S,s_0,L,\delta,\out)$ is a tuple $(N,\rightarrow)$
where $N\subseteq S\times\bP$ is a set of nodes,
and ${\rightarrow}\subseteq N\times N$ is a set of directed edges.
With a closed term $t$ we associate an evaluation tree $ET_M(t)=(N,\rightarrow)$ defined as the smallest
evaluation tree such that
\begin{itemize}
\item there is a root $(s_0,\epsilon)\in N$; and
\item whenever $(s,p)\in N$ and $\hd(t[p.L(s)])=f$ then
for every $(s',p')\in\delta(s,f)$ there is an edge $(s,p)\to(s',p.p')$ with
$(s',p')\in N$.
\end{itemize}
The successors of a node $n$ are given by $\suc(n)=\{n'\in N\mid n\rightarrow n'\}$.
\end{definition}

\begin{example}\label{ex:evaltree}
Figure~\ref{fig:evaltree} shows the term $t=f(g(a),f(f(a,g(a)),g(a)))$ and
its evaluation tree $ET_M(t)$, given the set automaton $M$ of Figure~\ref{fig:bigboy}.
There is a one-to-one correspondence between the positions of $t$ and the nodes of the evaluation tree.
\end{example}
\begin{figure}
\centering
\includegraphics[scale=1.0]{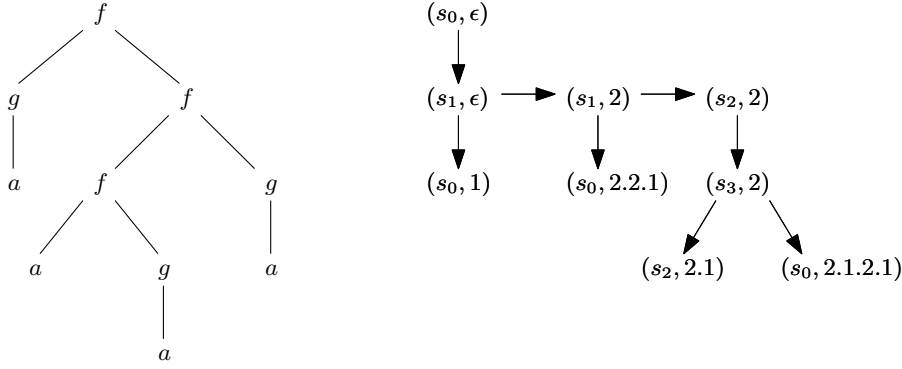}
\caption{The term $t=f(g(a),f(f(a,g(a)),g(a)))$ on the left and its evaluation tree $ET_M(t)$ on the right.}
\label{fig:evaltree}
\end{figure}
We prove that $ET_M(t)$ indeed corresponds to $t$ in general.
To this end, we define for every node the set of positions that still has to be inspected.
That is, the set of work that still has to be done.

\begin{definition}
Define the mapping $\cW:N\to\cP(\cD(t))$ by
\[
\cW(s,p)=\{p.q\in\cD(t)\mid \exists r:r\in\pos_{\MO}(s) \wedge q\leq r\}\,.
\]
\end{definition}
By definition of $s_0$ we have $\cW(s_0,\epsilon)=\cD(t)$.
Intuitively this makes sense, since at the beginning of the evaluation, no work is done and
all the positions still have to be inspected.
The mapping $\cW$ fixes a correspondence between an evaluation tree and the strict subset ordering $(\cD(t),\subset)$.
This follows from the following lemma.
A detailed proof can be found in the appendix.

\begin{restatable}{lemma}{orderingisomorphism}\label{lem:orderingisomorphism}
Let $ET_M(t)=(N,\rightarrow)$
and consider an arbitrary node $(s,p)\in N$.
Then
\begin{enumerate}
\item \label{lem:evaltreedecreasing}
For all successors $(s',p.p')\in\suc(s,p)$ we have that $p.L(s)\notin\cW(s',p.p')$.

\item \label{lem:evaltreesplitting}
For all distinct successors $(s_1,p.p_1),(s_2,p.p_2)\in\suc(s,p)$
the sets $\cW(s_1,p.p_1)$ and $\cW(s_2,p.p_2)$ are disjoint.

\item \label{lem:evaltreehierarchy}
We have that $\cW(s,p)=\{p.L(s)\}\cup\bigcup_{n\in \suc(s,p)} \cW(n)$.
\end{enumerate}
\end{restatable}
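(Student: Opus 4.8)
The plan is to prove the three claims of Lemma~\ref{lem:orderingisomorphism} by unfolding the definition of $\cW$ and tracking how positions in the match obligations transform across a single transition. The central observation is that a successor $(s',p.p')\in\suc(s,p)$ arises from some equivalence class $K\in[\deriv(s,f)]_{\dep}$ with $s'=\lift(K)$ and $p'=\gcp(K)$, where $f=\hd(t[p.L(s)])$. Thus $\pos_{\MO}(s')$ is obtained from $\pos_{\MO}(K)$ by stripping the common prefix $\gcp(K)$, and $\pos_{\MO}(K)\subseteq\pos_{\MO}(\deriv(s,f))$. The key structural facts I will lean on are Proposition~\ref{prop:matchobligations} (positions within a match obligation are below the announcement position, and distinct positions across obligations of a state, or of a derivative, are incomparable) and Proposition~\ref{prop:positions} (the prefix/comparability arithmetic, in particular $p\not\leq p.i$ and the behaviour of $\gcp$ under union).

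For part~\ref{lem:evaltreedecreasing}, I would show $p.L(s)\notin\cW(s',p.p')$ by assuming the contrary: if $p.L(s)\in\cW(s',p.p')$ then there is $r\in\pos_{\MO}(s')$ and $q\leq r$ with $p.p'.q=p.L(s)$, i.e.\ $p'.q=L(s)$. Since $s'=\lift(K)$, lifting $r$ back up gives a derivative position $p'.r=\gcp(K).r$ that sits at or below $L(s)$. The point is that the $\deriv$ construction strips exactly the position $L(s)$: no match obligation in $\deriv(s,f)$ retains $L(s)$ itself, because the $\mathit{unchanged}$ goals avoid $L(s)$ by definition, the $\mathit{reduced}$ goals replace $L(s)$ by strictly longer positions $L(s).i$, and the $\mathit{fresh}$ goals live at $L(s).i$ as well. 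I would formalize that every position in $\pos_{\MO}(\deriv(s,f))$ is either incomparable to $L(s)$ or strictly below it (i.e.\ of the form $L(s).i.\dots$), and then use $p.L(s)\leq$ any such position would force the impossible equality $L(s)=p'.q\geq p'.r$, contradicting $p\not\leq p.i$.

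For part~\ref{lem:evaltreesplitting}, disjointness of $\cW(s_1,p.p_1)$ and $\cW(s_2,p.p_2)$ should follow from the fact that the two successors come from distinct equivalence classes $K_1\neq K_2$ of $\dep$. Suppose a common element $p.u\in\cW(s_1,p.p_1)\cap\cW(s_2,p.p_2)$; then there are $r_1\in\pos_{\MO}(s_1)$, $r_2\in\pos_{\MO}(s_2)$ with $u=p_1.q_1=p_2.q_2$ and $q_j\leq r_j$. Lifting back, $p_1.r_1$ and $p_2.r_2$ are positions in $\pos_{\MO}(K_1)$ and $\pos_{\MO}(K_2)$ respectively, and the common prefix $p.u$ witnesses that some derivative position in $K_1$ is comparable to one in $K_2$. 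By the third bullet of Proposition~\ref{prop:matchobligations}, distinct positions in $\pos_{\MO}(\deriv(s,f))$ are incomparable, so comparability forces them to be the \emph{same} position; but shared positions make the two obligations directly $R$-related, hence in the same $\dep$-class, contradicting $K_1\neq K_2$. Part~\ref{lem:evaltreehierarchy} then combines the previous analysis: the inclusion $\supseteq$ holds because $\{p.L(s)\}\subseteq\cW(s,p)$ (as $L(s)$ is the position of a root goal, so $L(s)\in\pos_{\MO}(s)$) and each successor's work set is contained in $\cW(s,p)$; for $\subseteq$, any $p.q\in\cW(s,p)$ with $q\neq L(s)$ must have its witnessing obligation position survive into some reduced or unchanged goal and hence land in some class $K$, placing $p.q$ in that successor's work set.

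The main obstacle will be part~\ref{lem:evaltreehierarchy}, specifically the $\subseteq$ direction, because it requires showing that \emph{every} pending position other than $p.L(s)$ is genuinely inherited by exactly one successor — i.e.\ that the $\deriv$ operation loses no required positions except the one just inspected. This is where I expect the bookkeeping to be delicate: I must argue that a match obligation position $r$ with $q\leq r$ and $q\neq L(s)$ is never silently dropped by $\reduce$ (the only obligations discarded are those that genuinely contradict the observed symbol $f$, and such an obligation cannot be the unique witness for a position in $\cD(t)$ given $f=\hd(t[p.L(s)])$), and that after partitioning and lifting, $r$ reappears at the correct relative offset in precisely one class. I would handle this by a careful case split mirroring the four cases of the $\reduce$ definition, using Proposition~\ref{prop:matchobligations} to guarantee that incomparability keeps the classes from overlapping and that the inspected position $L(s)$ is the unique maximal-join position removed at this step.
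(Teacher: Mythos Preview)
Your plan matches the paper's proof: the same contradiction arguments for parts~\ref{lem:evaltreedecreasing} and~\ref{lem:evaltreesplitting} via Proposition~\ref{prop:matchobligations}, and the two-inclusion argument for part~\ref{lem:evaltreehierarchy} with the case split on whether the witnessing position $r$ equals $L(s)$. For the obstacle you flag in the $\subseteq$ direction, the paper sidesteps a four-case analysis of $\reduce$: when $r=L(s)$ and $q<L(s)$ it picks the index $i$ with $q\leq L(s).i$ and uses the \emph{fresh} goal $\ell@L(s).i\to\ell@L(s).i$ in the derivative, which exists regardless of what $\reduce$ did to the original obligation; when $r\neq L(s)$ the position $r$ persists because (implicitly via Proposition~\ref{prop:allfresh}) the fresh goal $\ell@r\to\ell@r$ already in $s$ is unchanged, so you need not argue that a discarded obligation ``cannot be the unique witness.''
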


By combining these properties,
we get the following two corollaries.

\begin{corollary}
For all terms $t$, we have that $ET_M(t)=(N,\rightarrow)$ is a finite tree.
\end{corollary}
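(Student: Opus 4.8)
The plan is to prove the statement by well-founded induction on $|\cW(s,p)|$, exploiting that $\cW(s_0,\epsilon)=\cD(t)$ is finite and that the three parts of Lemma~\ref{lem:orderingisomorphism} describe exactly how the work-sets decompose along the tree. Concretely, I would establish the strengthened claim that for every node $(s,p)\in N$ the substructure reachable from $(s,p)$ is a finite tree rooted at $(s,p)$; the corollary then follows by instantiating $(s,p)=(s_0,\epsilon)$. The induction is well-founded because each edge strictly shrinks $\cW$: if $(s',p.p')\in\suc(s,p)$ then item~\ref{lem:evaltreehierarchy} gives $\cW(s',p.p')\subseteq\cW(s,p)$, while item~\ref{lem:evaltreedecreasing} yields $p.L(s)\in\cW(s,p)\setminus\cW(s',p.p')$, so $\cW(s',p.p')\subsetneq\cW(s,p)$ and hence $|\cW(s',p.p')|<|\cW(s,p)|$.

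First I would dispatch finiteness and acyclicity. The strict decrease above already rules out cycles, since a cycle would force a work-set to be a proper subset of itself. For finiteness I would note that branching is finite: $\suc(s,p)$ has one element per pair in $\delta(s,f)$, which in turn has one element per equivalence class in $[\deriv(s,f)]_{\dep}$, and $\deriv(s,f)$ is a finite set. Combining finite branching with the strictly decreasing measure $|\cW(s,p)|$, bounded above by $|\cD(t)|$, the induction shows that the substructure reachable below every node has finitely many nodes.

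The main work is to see that the reachable substructure is genuinely a tree rather than a directed acyclic graph, i.e.\ that no node is produced along two different paths. Here item~\ref{lem:evaltreesplitting} is the crucial ingredient. The key observation is that if $m$ is reachable from $n$, then iterating item~\ref{lem:evaltreehierarchy} along the connecting path gives $\cW(m)\subseteq\cW(n)$, and since the inspected position of $m$ lies in $\cW(m)$ it therefore lies in $\cW(n)$. Consequently, for two distinct successors $n_1,n_2\in\suc(s,p)$ the sets of nodes reachable from $n_1$ and from $n_2$ are disjoint: a common reachable node would place its inspected position in $\cW(n_1)\cap\cW(n_2)$, which is empty by item~\ref{lem:evaltreesplitting}. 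Moreover item~\ref{lem:evaltreedecreasing} ensures $(s,p)$ itself lies in none of these reachable sets. In the inductive step this lets me glue the finitely many trees obtained for the children $n_1,\dots,n_k$ (pairwise node-disjoint, none containing $(s,p)$) under the single new root $(s,p)$ without creating a second parent for any node; edges never leave the subtree they originate in, because every successor of a node is by construction reachable from that node. The union is therefore again a finite tree.

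The step I expect to be the main obstacle is precisely this unique-parent argument, because the definition of $ET_M(t)$ as a least fixpoint does not by itself forbid a state/position pair from being generated as a successor of two distinct nodes; ruling this out is exactly what the disjointness in item~\ref{lem:evaltreesplitting} provides, and it must be threaded through the induction rather than applied once at the end. A minor separate case to handle is the final state $\emptyset$: such nodes satisfy $\cW(\emptyset,p)=\emptyset$ and have no outgoing transitions, so they arise only as leaves and serve as trivial base cases of the induction.
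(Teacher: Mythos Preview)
Your proposal is correct and follows exactly the route the paper intends: the paper offers no explicit proof here, stating only that the corollary follows ``by combining these properties,'' and your argument is a faithful unpacking of that claim using precisely the three items of Lemma~\ref{lem:orderingisomorphism}---item~\ref{lem:evaltreehierarchy} and item~\ref{lem:evaltreedecreasing} for the strictly decreasing measure $|\cW(s,p)|$ (hence finiteness and acyclicity), and item~\ref{lem:evaltreesplitting} for the unique-parent property. The only remark is that your aside about the final state $\emptyset$ is unnecessary: in the formal construction $\delta(s,f)$ is simply empty when $\deriv(s,f)$ has no classes, so such nodes are automatically leaves and require no special treatment.
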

\begin{corollary}\label{cor:bijection}
Define $\varphi:N\to\cD(t)$ by $\varphi(s,p)=p.L(s)$.
Then $\varphi$ is a bijection.
\end{corollary}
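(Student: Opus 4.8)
The plan is to deduce the bijection directly from Lemma~\ref{lem:orderingisomorphism}, together with the observation that $ET_M(t)$ is a finite tree (the first corollary). The map $\varphi(s,p)=p.L(s)$ sends a node to the unique position it inspects, so I must show $\varphi$ is well-defined (lands in $\cD(t)$), injective, and surjective.

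First I would check that $\varphi$ maps into $\cD(t)$. For any node $(s,p)\in N$, item~\ref{lem:evaltreehierarchy} of Lemma~\ref{lem:orderingisomorphism} gives $p.L(s)\in\cW(s,p)\subseteq\cD(t)$, since $\cW$ is defined as a subset of $\cD(t)$. So well-definedness is immediate.

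For surjectivity I would argue by downward induction along the tree using item~\ref{lem:evaltreehierarchy}. Since $\cW(s_0,\epsilon)=\cD(t)$, it suffices to show that every position in $\cW(s,p)$ equals $\varphi$ of some node reachable from $(s,p)$. The equation $\cW(s,p)=\{p.L(s)\}\cup\bigcup_{n\in\suc(s,p)}\cW(n)$ writes $\cW(s,p)$ as the value $\varphi(s,p)=p.L(s)$ together with the $\cW$-sets of its successors; unfolding this recursively down the finite tree exhibits $\cD(t)$ as exactly the union of the singletons $\{\varphi(n)\}$ over all nodes $n$, giving surjectivity. Finiteness of the tree (previous corollary) guarantees this recursion terminates.

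For injectivity, suppose $\varphi(n_1)=\varphi(n_2)$ for distinct nodes. Item~\ref{lem:evaltreedecreasing} says a node's inspected position is removed from the work set of each of its successors, and unwinding this along the unique tree paths shows $\varphi(n)\in\cW(n)$ while $\varphi(n)\notin\cW(n')$ for any strict descendant $n'$; hence no ancestor/descendant pair shares a $\varphi$-value. Item~\ref{lem:evaltreesplitting} handles the remaining case of two nodes in disjoint subtrees: their $\cW$-sets are disjoint, and since $\varphi(n_i)\in\cW(n_i)$, the values differ. The cleanest packaging, which I would prefer, is the counting argument: surjectivity plus finiteness gives $|N|\geq|\cD(t)|$, and the hierarchy of item~\ref{lem:evaltreehierarchy} together with the disjointness of item~\ref{lem:evaltreesplitting} shows that the singletons $\{\varphi(n)\}$ partition $\cD(t)$, forcing $|N|=|\cD(t)|$ and making the surjection a bijection. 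The main obstacle is marshalling items~\ref{lem:evaltreedecreasing} and~\ref{lem:evaltreesplitting} into a clean injectivity argument covering both the ancestor/descendant case and the incomparable-nodes case; the partition/counting route sidesteps the case analysis and is the approach I would adopt.
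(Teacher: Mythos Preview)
Your proposal is correct and is precisely the deduction the paper intends: the corollary is stated without proof, as an immediate consequence of the three items of Lemma~\ref{lem:orderingisomorphism} together with finiteness of $ET_M(t)$. Your partition argument---unfolding item~\ref{lem:evaltreehierarchy} while using items~\ref{lem:evaltreedecreasing} and~\ref{lem:evaltreesplitting} to see that the union is disjoint---is the natural way to cash this out, and nothing more is required.
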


It follows that the evaluation of a term terminates,
and every position is inspected exactly once.
Whenever an evaluation tree node has multiple outgoing edges, it means that parallellism is possible.
This parallellism preserves the efficiency of no observation being made twice.

\subsection{Soundness and completeness}
First, consider the following evaluation function that takes an evaluation tree node
and traverses it until a leaf node is reached.
\begin{definition}
Given $ET_M(t)=(N,\rightarrow)$,
define $\eval_M:N\to\cP(\cL\times\bP)$ by
\[
\eval_M(s,p)=\{\ell@p.q\mid \ell@q\in\out(s,\hd(t[p.L(s)]))\}\cup\bigcup_{(s',p.p')\in\suc(s,p)}\eval_M(s',p.p')
\,.
\]
\end{definition}
By Corollary~\ref{cor:bijection}, applying $\eval$ on the initial state from the root position is the same as retrieving the
output at every level of the evaluation tree.
\begin{equation}\label{eq:treetraversal}
\eval_M(s_0,\epsilon)=\bigcup_{(s,p)\in N}\{\ell@p.q\mid \ell@q\in\out(s,\hd(t[p.L(s)]))\}
\,.
\end{equation}

\begin{restatable*}[Correctness]{theorem}{correctness}
For all closed terms $t$,
\[\eval_M(s_0,\epsilon)=\{\ell@p\in\cL\times\cD(t)\mid\text{$\ell$ matches $t$ at $p$}\}
\,.
\]
\end{restatable*}
We show both inclusions at the end of this section.
The inclusion from left to right is the soundness claim.
When the evaluation yields an output,
then it is indeed a correct match.
The inclusion from right to left is the completeness claim.
When some pattern matches at some position,
then the evaluation will output it at some point.

To understand soundness, consider that match goals carry history.
Intuitively, a match goal $a@1, b@2\to f(a,b)@\epsilon$
has a history of having seen $f$ already.
A state with this goal can only be reached by evaluating a term with symbol $f$.
This notion can be formalised as follows.

\begin{definition}
The \emph{history of an evaluation tree node $(s,p)$ respects $t$} iff
for all goals $mo\to\ell@q\in s$,
for all $r\in\cD(\ell)$ such that $\ell[r]\neq\dc$,
if there is some $r'\in\pos(mo)$ with $r\not\leq r'$ then
$\hd(t[p.r])=\hd(\ell[r])$.
\end{definition}
With this definition, the following invariant is the key to soundness.
A proof can be found in the appendix.

\begin{restatable}{lemma}{history}\label{lem:history}
Let $ET_M(t)=(N,\rightarrow)$.
The history of every node $(s,p)$ respects $t$.
\end{restatable}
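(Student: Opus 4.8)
The plan is to prove the invariant by induction following the inductive (least fixed point) definition of $ET_M(t)$: the base case is the root node $(s_0,\epsilon)$, and the inductive step passes from a node $(s,p)$ to a successor $(s',p.p')$, where $(s',p')\in\delta(s,f)$ and $f=\hd(t[p.L(s)])$. For the base case, every goal of $s_0$ is a fresh root goal $\ell@\epsilon\to\ell@\epsilon$, so its match obligation has the single position $\epsilon$; since $\epsilon$ is a prefix of every position, the hypothesis of the implication in the definition can never be met, and the invariant holds vacuously. This reflects that at the root nothing has yet been observed.

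For the inductive step I fix a goal $mo'\to\ell@q'$ of $s'$. By construction $s'=\lift(K)$ and $p'=\gcp(K)$ for a dependency class $K\in[\deriv(s,f)]_{\dep}$, so this goal is obtained by shortening the positions of a goal of $\deriv(s,f)$ by the prefix $p'$. That underlying goal falls into one of the three parts $\mathit{unchanged}$, $\mathit{reduced}$, $\mathit{fresh}$ of $\deriv(s,f)$, and I treat each in turn. The key bookkeeping observation, used throughout, is that $\lift$ shifts every obligation position $x$ to $x'$ with $x=p'.x'$ while the position pointer moves from $p$ to $p.p'$; hence every \emph{absolute} position $p.x=(p.p').x'$ at which $t$ is inspected is unchanged by the step, and any equation $\hd(t[\cdots])=\hd(\ell[\cdots])$ established at $(s,p)$ transfers verbatim to $(s',p.p')$.

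For a \emph{fresh} goal the obligation again consists of the single announcement position, so as in the base case the invariant holds vacuously. For an \emph{unchanged} goal we have $L(s)\notin\pos(mo)$; since $L(s)\in\pos_{\MO}(s)$ and, by Proposition~\ref{prop:matchobligations}, distinct obligation positions in a reachable state are pairwise incomparable, $L(s)$ is incomparable to every position of this goal's obligation. Thus observing $f$ at $L(s)$ neither lies in nor strictly below this goal's frontier, the set of already-observed pattern positions is unchanged, and the induction hypothesis (transferred by the shift above) gives the claim. For a \emph{reduced} goal the frontier changes: $L(s)$ is removed from $\pos(mo)$ and replaced by the argument positions $L(s).i$, so the observed region grows by exactly the single pattern position sitting at $L(s)$; using Proposition~\ref{prop:positions} one checks that no other non-variable position enters the observed region, since any deeper non-variable position still lies below some new frontier position $L(s).i$. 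For that one new position the required equation is precisely $\hd(t[p.L(s)])=f$, which holds because $f=\hd(t[p.L(s)])$ by definition of the transition and $\hd(\ell)=f$ is exactly the precondition under which $\reduce$ fires. All previously observed positions are again supplied by the induction hypothesis, closing the step.

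The main obstacle I anticipate is the position bookkeeping: correctly aligning the frame of a goal $mo\to\ell@q$ with the absolute positions of $t$ across the $\lift$/reposition step, and verifying in the reduced case that the observed region grows by exactly the one position $L(s)$ and no more. The incomparability facts of Proposition~\ref{prop:matchobligations} together with the prefix properties of Proposition~\ref{prop:positions} are what make these frontier computations go through cleanly; once they are in hand, the only genuinely new history fact introduced at each step is the single observation $\hd(t[p.L(s)])=f$, and matching it against $\hd(\ell)$ via the $\reduce$ precondition is the crux of the argument.
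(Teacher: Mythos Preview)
Your proposal is correct and follows essentially the same approach as the paper's proof: induction along the evaluation tree with the three-way case split on fresh/unchanged/reduced goals, the first two cases handled by vacuity and the induction hypothesis respectively, and the reduced case closed by showing that exactly the single position $L(s)$ enters the observed region with the required head-symbol equation supplied by the $\reduce$ precondition together with $f=\hd(t[p.L(s)])$. You are in fact more explicit than the paper about the $\lift$/position-pointer bookkeeping (the paper's proof largely elides this and works directly in $\deriv(s,f)$); one small slip is that in the reduced case your ``$\hd(\ell)=f$'' should refer to the head of the obligation subpattern sitting at $L(s)$ (i.e.\ $\hd(\ell[L(s)])$ in the paper's framing), not the head of the announced pattern $\ell$ itself.
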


To understand completeness, observe that upon taking derivatives
a fresh match obligation is added for every new position.
The partitioning then takes care of grouping the fresh goals with
other goals that have the same positions.

\begin{proposition}\label{prop:allfresh}
Whenever a state has a match obligation on position $p$,
then it has the fresh match goal $\ell@p\to\ell@p$ for all $\ell\in\cL$ as well.
\end{proposition}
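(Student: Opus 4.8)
The plan is to prove the statement by induction on reachability of states, mirroring the inductive definition of reachable states through $\delta$. For the base case, the initial state $s_0=\{\ell@\epsilon\to\ell@\epsilon\mid\ell\in\cL\}$ has match obligations only at the root position $\epsilon$, and it contains $\ell@\epsilon\to\ell@\epsilon$ for every $\ell\in\cL$ by definition, so the property holds. For the inductive step I assume the property for a reachable state $s$ and prove it for an arbitrary successor, which by construction is of the form $s'=\lift(K)$ for some $K\in[\deriv(s,f)]_{\dep}$ and some $f\in\bF$.

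First I would establish an intermediate claim at the level of the derivative: if $P\in\pos_{\MO}(\deriv(s,f))$, then $\deriv(s,f)$ contains the fresh goal $\ell@P\to\ell@P$ for every $\ell\in\cL$. This is a case analysis on $P$. If $P=L(s).i$ for some $1\le i\le\#f$, the required fresh goals are supplied directly by the set $\mathit{fresh}$ in the definition of $\deriv$. Otherwise, inspecting the three components of $\deriv(s,f)$ shows that $P$ already belongs to $\pos_{\MO}(s)$ and that $P\neq L(s)$: the position $L(s)$ is excluded by the side condition of $\mathit{unchanged}$ and removed by $\reduce$, while the fresh positions $L(s).i$ differ from $L(s)$ by Proposition~\ref{prop:positions}. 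The induction hypothesis then yields $\ell@P\to\ell@P\in s$ for all $\ell\in\cL$, and since $P\neq L(s)$ each such goal satisfies $L(s)\notin\pos(\{\ell@P\})$, hence it survives into $\mathit{unchanged}\subseteq\deriv(s,f)$.

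Next I would connect this to the partition and the lift. Suppose $s'=\lift(K)$ has a match obligation at position $p$. Unwinding $\lift$, the witnessing unlifted goal in $K$ carries an obligation at position $P:=\gcp(K).p$, so $P\in\pos_{\MO}(\deriv(s,f))$. By the intermediate claim, $\deriv(s,f)$ contains every fresh goal $\ell@P\to\ell@P$. The obligation $\{\ell@P\}$ of such a fresh goal shares the position $P$ with the witnessing obligation in $K$, so the two are directly dependent and hence lie in the same $\dep$-class; since that class is $K$, every fresh goal $\ell@P\to\ell@P$ belongs to $K$. Finally, because $\gcp(K)$ is a common prefix of all positions occurring in $K$ (each goal's announcement position is a prefix of its obligation positions, as is readily checked for the three components of the derivative, and $\gcp(K)$ is a prefix of every announcement position by definition), lifting $K$ sends $\ell@\gcp(K).p\to\ell@\gcp(K).p$ to exactly $\ell@p\to\ell@p$ in $s'$ for every $\ell\in\cL$, which is the desired conclusion.

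The main obstacle is the bookkeeping of positions through the three-stage construction of $\delta$ (derivative, partition, lift), and in particular the co-location step: it is not enough that the fresh goals exist in $\deriv(s,f)$; they must survive into the same class $K$ as the obligation witnessing position $p$ in $s'$. This is secured by the observation that a fresh goal and any obligation sharing a position are directly dependent, so the transitive closure $\dep$ keeps them together, together with the fact that $\lift$ merely strips the common prefix $\gcp(K)$ and thereby preserves the shape $\ell@\cdot\to\ell@\cdot$ of fresh goals.
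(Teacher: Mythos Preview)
Your proof is correct and, in fact, supplies considerably more detail than the paper does. In the paper, Proposition~\ref{prop:allfresh} is stated without a formal proof; the only justification is the preceding sentence noting that ``upon taking derivatives a fresh match obligation is added for every new position'' and that ``the partitioning then takes care of grouping the fresh goals with other goals that have the same positions.'' Your inductive argument makes precisely this intuition rigorous: the intermediate claim shows that every obligation position of $\deriv(s,f)$ carries all fresh goals (the first quoted clause), and the co-location step via direct dependency shows that these fresh goals land in the same class $K$ and hence survive $\lift$ (the second quoted clause). So your approach is exactly in line with the paper's intended reasoning, just spelled out.

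One minor remark on presentation: in your last paragraph you appeal to the fact that $\gcp(K)$ is a prefix of every obligation position in $K$, arguing via ``each goal's announcement position is a prefix of its obligation positions.'' In the paper this is Proposition~\ref{prop:matchobligations}, which is however only stated for reachable states, not for raw derivatives. Your parenthetical ``as is readily checked for the three components of the derivative'' is the right patch, and indeed the check is routine (for reduced goals, $L(s).i<L(s)\leq p$ since the original obligation had a pair at $L(s)$). You might make this dependence explicit rather than leaving it in parentheses.
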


The following invariant connects to Proposition~\ref{prop:allfresh}.
Intuitively, if a term matches pattern $\ell$ at position $p.q$,
and the evaluation tree reaches a state with some goal $mo\to\ell@q$ is a match announcement,
then this announcement belongs to some goal in some state visited by $\eval$,
until it is given as an output. A detailed proof can be found in the appendix.

\begin{restatable}{lemma}{followthemo}\label{lem:followthemo}
If $\ell$ matches $t$ at $p.q$ and there is a node $(s,p)$ and a match goal
$mo\to\ell@q\in s$ then either $\ell@q\in\out(s,\hd(t[p.L(s)])$ or
there is a node $(s',p.p')\in\suc(s,p)$ such that $s'$ has some goal
$mo'\to\ell@q'$ with $q=p'.q'$.
\end{restatable}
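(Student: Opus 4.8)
The plan is to trace the single goal $mo\to\ell@q$ through the construction of the outgoing transitions of $s$ while observing $f=\hd(t[p.L(s)])$, and to read off which of the two disjuncts arises. Write $\pi=L(s)$. Before any case analysis I would first establish a structural invariant that the whole construction preserves and that is proved by induction on the number of derivative steps needed to reach $s$: every obligation in a goal $mo\to\ell@q$ has the form $\ell[r]@q.r$ for some $r\in\cD(\ell)$ with $\ell[r]\neq\dc$. In words, obligation patterns are subpatterns of the announced pattern $\ell$, and an obligation position is the announcement position $q$ extended by the matching pattern position $r$. This is immediate for the fresh goal $\ell@q\to\ell@q$ (take $r=\epsilon$); one application of $\reduce$ replaces $\ell[r]@q.r$ by its non-$\dc$ arguments $\ell[r.i]@q.r.i$, preserving the shape; and $\lift$ strips a common prefix from every obligation position and from $q$ simultaneously, so the relation ``position $=q$ extended by a pattern position'' survives. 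Together with the incomparability of distinct obligation positions (Proposition~\ref{prop:matchobligations}), this invariant guarantees there is \emph{at most one} obligation of $mo$ at $\pi$.

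Next I would split on whether $\pi\in\pos(mo)$. If $\pi\notin\pos(mo)$, the goal is untouched and lands verbatim in $\deriv(s,f)$ through the $\mathit{unchanged}$ part. If $\pi\in\pos(mo)$, let $\ell'@\pi$ be the unique obligation at $\pi$; by the invariant $\ell'=\ell[r]$ and $\pi=q.r$. This is the one place where the match hypothesis is used: since $\ell$ matches $t$ at $p.q$ and $\ell[r]\neq\dc$, the matching condition gives $\hd(t[p.\pi])=\hd(t[(p.q).r])=\hd(\ell[r])=\hd(\ell')$, so $\hd(\ell')=f$. This rules out the discarding fate $\hd(\ell')\neq f$ completely. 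With $\hd(\ell')=f$ two possibilities remain: if $mo=\{f(\dc,\dots,\dc)@\pi\}$ then $\reduce(mo,f,\pi)=\emptyset$, the obligation is completed, and by the definition of $\out$ we obtain $\ell@q\in\out(s,f)$, which is the first disjunct; otherwise $\reduce(mo,f,\pi)\neq\emptyset$ and the reduced goal $\reduce(mo,f,\pi)\to\ell@q$ lands in $\deriv(s,f)$ through the $\mathit{reduced}$ part.

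In the two surviving cases the goal (unchanged or reduced) belongs to $\deriv(s,f)$ with announcement still $\ell@q$, hence to a unique dependency class $K\in[\deriv(s,f)]_{\dep}$. Since $q\in\pos_{\MA}(K)$, the join characterisation of $\gcp$ from Proposition~\ref{prop:positions} gives $q\leq\gcp(K)$, i.e.\ $q=\gcp(K).q'$ for a unique $q'$. Applying $\lift$ turns the goal into $\lift(mo')\to\ell@q'$ inside the state $\lift(K)$, and the construction records $(\lift(K),\gcp(K))\in\delta(s,f)$. By the definition of the evaluation tree, $(s,p)$ then has the successor $(\lift(K),p.\gcp(K))\in\suc(s,p)$. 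Taking $p'=\gcp(K)$ and $mo'=\lift(mo')$, this successor carries the goal $mo'\to\ell@q'$ with $q=p'.q'$, which is exactly the second disjunct.

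The only real obstacle is the subpattern invariant of the first paragraph: it is what pins the obligation at $\pi$ down to $\ell[r]@q.r$ and so lets the hypothesis ``$\ell$ matches $t$ at $p.q$'' force $\hd(\ell')=f$ and exclude discarding. Everything after that is bookkeeping with $\deriv$, $\reduce$, $\lift$ and the join properties of $\gcp$. I would therefore prove the invariant once as a standalone induction on reachability, after which the case analysis above is essentially mechanical.
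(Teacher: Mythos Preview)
Your argument follows the same skeleton as the paper's proof: split on whether $L(s)\in\pos(mo)$, recognise the output case $mo=\{f(\dc,\dots,\dc)@L(s)\}$, and otherwise follow the (unchanged or reduced) goal into its dependency class $K$ and through $\lift$ to obtain the successor $(\lift(K),p.\gcp(K))$ with $q=\gcp(K).q'$.

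Where you go further than the paper is in isolating and proving the subpattern invariant---that every obligation in $mo\to\ell@q$ has the shape $\ell[r]@q.r$---and then using it together with the match hypothesis to force $\hd(\ell')=f$ and rule out the discarding fate. The paper's proof simply asserts that the reduced goal lands in $\deriv(s,f)$ without ever invoking the hypothesis ``$\ell$ matches $t$ at $p.q$''; your invariant is exactly what plugs that gap, and you are right to flag it as the one nontrivial step. So: same route, but your version is more complete on the point that actually matters.
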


\correctness
\begin{proof}
As mentioned before, we show soundness and completeness.
\begin{itemize}
\item[$\subseteq$]
By Equation~\ref{eq:treetraversal} it suffices to show that for all nodes $(s,p)$,
whenever $\ell@q\in\out(s,\hd(t[p.L(s)]))$ then $\ell$ matches $t$ at $p.q$.
Consider that $\hd(t[p.L(s)])=f$.
By definition of $\eta$, see Section~\ref{sec:output}, we have $f(\dc,\dots,\dc)@L(s)\to\ell@q\in s$.
By Lemma~\ref{lem:history}, the history of node $(s,p)$ respects $t$.
Then for all positions $r\in\cD(\ell)$ with $\ell[r]\neq\dc$ and $r\neq L(s)$ we have that
$\hd(t[p.r])=\hd(\ell[r])$.
From the additional observation $\hd(\ell[L(s)])=f=\hd(t[p.L(s)])$
and Proposition~\ref{prop:matchobligations}
it follows that $\ell$ matches $t$ at $p.q$.

\item[$\supseteq$]
Consider that $\ell$ matches $t$ at $p$.
By Corollary~\ref{cor:bijection},
consider the node $\varphi^{-1}(p)=(s,q)$.
By definition of $\varphi$ we have $q.L(s)=p$.
Since $L(s)\in\pos_{\MO}(s)$,
the fresh goal $\ell@L(s)\to\ell@L(s)$ is $s$ by Proposition~\ref{prop:allfresh}.
Then the repeated application of Lemma~\ref{lem:followthemo}
yields a node $(s',q.q')$ such that $s'$ has some goal $mo'\to\ell@r$
with $L(s)=q'.r$ and $\ell@r\in\out(s',\hd(t[q.q'.L(s')]))$.
Then $\ell@q.q'.r\in\eval(s',q.q')$ by definition of $\eval$.
Since $q.q'.r=q.L(s)=p$ it follows that $\ell@p\in\eval(s',q.q')$.
By Equation~\ref{eq:treetraversal} we conclude $\ell@p\in\eval(s_0,\epsilon)$.
\end{itemize}
\end{proof}

\section{Complexity and automaton size}\label{sec:complexity}
Given an automaton $M$ of pattern set $\cL$,
the matching algorithm $\eval_M(s_0,t)$ runs in $O(d(n+m))$ time where
$n$ is the number of function symbols in $t$, and $m$ is the amount of pattern matches in $t$,
and $d$ is the maximal depth of any pattern in $\cL$.
The factor $d$ is due to the fact that observing a function symbol on position $L(s)$
takes $|L(s)|$ time in general.

The size of a set automaton is exponential in the worst case,
which is not surprising due to similar observations concerning the root pattern matching problem.
Gr\"af observed that a left-to-right pattern matching automaton is exponentially large
in the worst case \cite{graf:lefttoright}.
Sekar et al. observed that adaptive pattern matching automata are exponentially big in the worst case as well,
although a good traversal can reduce the automaton size exponentially in some cases \cite{sekar:adaptive}.

However, practical experiments with pattern sets
show that the automaton size is small, which is in line with other forms of automaton based matching.
We generated set automata to match the left hand sides of rewrite systems used in mCRL2 \cite[Appendix B]{DBLP:books/mit/GrooteM2014},
see Table~\ref{tab:mcrl2experiments}.
In almost all cases the amount of states in the set automaton does not exceed the number of patterns.

\begin{table}
\centering
\begin{tabular}{l|lll}
Specification & Signature size & Amount of patterns & Amount of states \\ \hline
int&    22& 50&27\\ \hline 
pos&    15 & 46&45\\ \hline 
nat&    37&91&117\\ \hline 
fset&   15&28&23\\ \hline 
set&    20&40&24\\ \hline 
list&   16&26&24\\ \hline 
bool&   9&27&14\\ \hline 
bag&    29&44&32\\ \hline 
fbag&   18&30&25\\ \hline 
real&   30&31&31\\ \hline 
\end{tabular}
\caption{The set automaton sizes for parts of the default mCRL2 specification}
\label{tab:mcrl2experiments}
\end{table}

The degree of freedom in the choice of state labels strongly influences the set automaton size.
Consider for example the set of terms $\{t_n\}_{n\in\bN}$ given by $t_0=\dc$ and $t_{n+1}=f(t_n,g(\dc))$.
The set automaton in Example~\ref{ex:bigboy} is generated for pattern set $\{t_2\}$.
We found that the choice of state labels influences the automaton size by a quadratic factor.
By choosing the right-most available position one obtains an automaton of size $2n$
for the pattern set $\{t_n\}$.
A left-most strategy yields an automaton of size $n^2+n$ for $\{t_n\}$.

\section{Future work}
\label{sec:futurework}
The original motivation for this work is to construct a high performance term
rewriter suited for parallel processing, which can both work on a single large term 
as well as on many small terms, repeatedly. This means that the matching
effort must be minimal, which is provided by the automaton, and it also requires 
that the subject term is not transformed before matching commences.
To enable term rewriting, our matching algorithm must still be extended with term rewriting along
lines set out in \cite{hoffmann:interpreter}.
We want to employ that we know the structure of the right-hand side of a rewrite step,
minimizing inspecting known parts of a newly constructed term.
Fokkink et al. have a similar approach in \cite{fokkink:arm}, based on Hoffmann and O'Donnell's
algorithm from \cite{hoffmann:matching}.

Our algorithm has freedom in the position of the function symbol to
be selected, as well as in the next state/position pair that the evaluator chooses. It is interesting
to see whether with knowledge about the distribution of function symbols in subject terms, this freedom 
can be exploited to construct a most efficient set automaton. For instance, we may want to generate the
first match as quickly as possible. This is particularly interesting in 
combination with rewriting where some sub-terms do not have to be inspected as they will be removed
by the rewriting rules.

Observe that the algorithm as it stands does not employ non-linear patterns in line
with matching algorithms such as \cite{sekar:nonlinear}.
But in term rewriting non-linear patterns do occur and therefore an extension to support them is desired.
An extension that provides all matches in a setting where some symbols are known to be associative and/or commutative
would also be interesting.

\bibliographystyle{plain}
\bibliography{bibliography}

\appendix

\section{Appendix}
\subsection{Proof for Lemma~\ref{lem:rootgoal}}
\rootgoal*
\begin{proof}
By induction.
If $|K|=1$ then the claim follows trivially.
If $|K|\geq 2$,
then by virtue of $\dep$ being a transitive closure,
we can partition $K$ into subsets $K_1, K_2\subseteq K$ such that
\begin{itemize}
\item there are match goals $mo_1\to \ell_1@p_1\in K_1$ and 
$mo_2\to\ell_2@p_2\in K_2$
such that $\pos(mo_1)\cap\pos(mo_2)\neq\emptyset$; and
\item partitioning $K_1$ and $K_2$ with respect to $\dep_{K_1}$ and $\dep_{K_2}$ respectively yields
$K_1$ and $K_2$.
\end{itemize}
Pick a position $p$ with $p\in\pos(mo_1)$ and
$p\in\pos(mo_2)$.
By Proposition~\ref{prop:matchobligations} we know that $p\leq p_1$ and $p\leq p_2$.
By the induction hypothesis there are two match goals
$mo_1'\to\ell_1'@\gcp(K_1)\in K_1$ and
$mo_2'\to\ell_2'@\gcp(K_2)\in K_2$.
By properties of $\gcp$,
it follows that $p_1\leq\gcp(K_1)$ and $p_2\leq\gcp(K_2)$.
By transitivity we get $p\leq\gcp(K_1)$ and $p\leq\gcp(K_2)$.
Then by Proposition~\ref{prop:positions},
$\gcp(K_1)$ and $\gcp(K_2)$ are comparable.
Since 
\[
\gcp(K)
= \gcp(\pos_{\MA}(K_1)\cup\pos_{\MA}(K_2))
= \gcp(\pos_{\MA}(K_1))\vee\gcp(\pos_{\MA}(K_2))
\,,
\]
by adding the syntactic sugar for $\gcp(K_1)$ and $\gcp(K_2)$,
and by applying position properties,
it follows that $\gcp(K)=\gcp(K_1)$ or $\gcp(K)=\gcp(K_2)$.
Since both match goals
$mo_1'\to\ell_1'@\gcp(K_1)\in K_1$ and
$mo_2'\to\ell_2'@\gcp(K_2)\in K_2$ are in $K$, we conclude the proof.
\end{proof}

\subsection{Proof for Lemma~\ref{lem:positionbound}}
\positionbound*
\begin{proof}
The initial state easily satisfies the claim.
We show that the claim is an invariant over the production of a transition.

Let $s$ be a reachable state and suppose that $\pos_{\MO}(s)\subseteq\cR$.
Let $f\in\bF$ and consider that $(s',p')\in\delta(s,f)$.
By definition $s'=\lift(K)$ and $p'=\gcp(K)$ for some $K\in[\deriv(s,f)]_{\dep}$.
We have to show that $\pos_{\MO}(\lift(K))\subseteq\cR$.

Consider some position $p\in\pos_{\MO}(\lift(K))$.
By definition of $\deriv$ and $\lift$,
we have that $\gcp(K).p\in\pos_{\MO}(K)$.
Observe that $\cR$ is upward closed under the position prefix ordering $\leq$.
That is, whenever $x\in\cR$ and $x \leq y$ then $y\in\cR$.
Therefore we can ignore $\gcp(K)$; it suffices to show that $p\in\cR$.

If $p$ is the position of an unchanged pair in some match obligation of $K$,
then $p\in\cR$ by assumption.
If $p$ is a position in a changed pair of some fresh or reduced match obligation,
then it suffices to show that $L(s).i\in\cR$ for all $i\leq\# f$.
By construction, $L(s)$ is the position of a root goal in $s$.
Therefore, $L(s)\in\cD(\cL)$.
Since $\#f\leq N$, we have that $i\leq N$ as well.
Hence, $L(s).i\in\cR$.
\end{proof}

\subsection{Proof for Lemma~\ref{lem:orderingisomorphism}}
\orderingisomorphism*
\begin{proof}
Consider that $\hd(t[p.L(s)])=f$.
By construction of $M$ and $ET_M(t)$,
we can characterise the successors of node $(s,p)$ by
\begin{equation}\label{eq:successors}
\suc(s,p)=\{(\lift(K),p.\gcp(K))\in S\times\bP\mid K\in[\deriv(s,f)]_{\dep}\}\,.
\end{equation}

\begin{enumerate}
\item
Towards a contradiction,
using Equation~\ref{eq:successors},
pick an equivalence class $K\in[\deriv(s,f)]_{\dep}$ and
assume that $p.L(s)\in\cW(\lift(K),p.\gcp(K))$.
By definition of $\cW$, there is a pair
$\ell@p$ in some match obligation in $\lift(K)$ and some $q\leq p$ such that
$p.L(s)=p.\gcp(K).q$.
From the position properties it follows that $L(s)=\gcp(K).q$.
From $q\leq p$ it follows that $\gcp(K).q\leq\gcp(K).p$.

Since $\ell@p$ is part of a match obligation in $\lift(K)$,
by definition $\ell@\gcp(K).p$ is part of a match obligation in $K$.
Since $K\subseteq\deriv(s,f)$,
there are two possibilities.
\begin{itemize}
\item If $\ell@\gcp(K).p=\ell@L(s).i$ then it is part of a reduced or fresh match goal.
Then $\gcp(K).p=L(s).i$ for some index $1\leq i\leq \# f$.
But then by
\[L(s)=\gcp(K).q\leq\gcp(K).p= L(s).i\,,\]
we have $L(s)\leq L(s).i$, which contradicts Proposition~\ref{prop:positions}.
\item Otherwise $\ell@\gcp(K).p$ is also part of a match obligation in $s$.
But since $L(s)\leq\gcp(K).p$ and $L(s)\in\pos_{\MO}(s)$,
it must be that $L(s)=\gcp(K).p$ by Proposition~\ref{prop:matchobligations}.
Then, by definition of $\reduce$ it cannot be that $\ell@\gcp(K).p$ is a match obligation of $K$,
a contradiction.
\end{itemize}

\item
By Equation~\ref{eq:successors}, let $K_1,K_2\in[\deriv(s,f)]_{\dep}$ such that $s_1=\lift(K_1)$ and $s_2=\lift(K_2)$,
and $p_1=\gcp(K_1)$ and $p_2=\gcp(K_2)$.

Towards a contradiction,
pick a position $q$ such that that $q\in\cW(\lift(K_1),p.\gcp(K_1))$ and $q\in\cW(\lift(K_2),p.\gcp(K_2))$.
By definition of $\cW$ there are pairs $\ell_1@q_1$ and $\ell_2@q_2$ that are part of some match obligation in
$\lift(K_1)$ and $\lift(K_2)$ respectively, and there are two positions $q_1'\leq q_1$ and $q_2'\leq q_2$
such that $q=p.\gcp(K_1).q_1'$ and $q=p.\gcp(K_2).q_2'$.
Then it follows that $\gcp(K_1).q_1'=\gcp(K_2).q_2'$,

By definition of $\lift$, the pairs $\ell_1@\gcp(K_1).q_1$ and $\ell_2@\gcp(K_2).q_2$
are part of some match obligation in $K_1$ and $K_2$ respectively.
But then from $\gcp(K_1).q_1'\leq \gcp(K_1).q_1$ and $\gcp(K_1).q_1'\leq\gcp(K_2).q_2$
it must be that $\gcp(K_1).q_1$ and $\gcp(K_2).q_2$ are comparable.
Since $\ell_1@\gcp(K_1).q_1$ and $\ell_2@\gcp(K_2).q_2$ are both
elements of $\deriv(s,f)$,
by Proposition~\ref{prop:matchobligations} it follows that
$\gcp(K_1).q_1=\gcp(K_2).q_2$,
which violates the assumption that $K_1$ and $K_2$ are distinct equivalence classes.

\item
Let $\hd(t[p.L(s)])=f$.
By Equation~\ref{eq:successors} we should show that
\[
\cW(s,p)=\{p.L(s)\}\cup\bigcup_{K\in[\deriv(s,f)]_{\dep}}\cW(\lift(K),p.\gcp(K))
\,.
\]
We prove both inclusions.
\begin{itemize}
\item[$\supseteq$]
For the singleton set,
it follows from $L(s)\in\pos_{\MO}(s)$ and the definition of $\cW$ that $p.L(s)\in\cW(s,p)$.
For the big union,
consider some $K\in[\deriv(s,f)]_{\dep}$
and a position $p.\gcp(K).q\in\cW(\lift(K),p.\gcp(K))$.
By definition of $\cW$ there is a pair $\ell@r$ which is part of some match obligation in $\lift(K)$
such that $q\leq r$.
Then $\ell@\gcp(K).r\in mo'$ with $mo'$ a match obligation in $K$.

From $K\in[\deriv(s,f)]_{\dep}$ there are two cases.
If $\ell@\gcp(K).r$ is in some match obligation in $s$,
then $p.\gcp(K).q\in\cW(s,p)$ by virtue of $\gcp(K).q\leq r$ and $r\in\pos_{\MO}(s)$.
Otherwise,
$\gcp(K).r=L(s).i$ for some $i\leq\#f$ and $\ell@\gcp(K).r$ is part of a fresh or reduced match obligation.
Since $L(s)\in\pos_{\MO}(s)$ there is a pair $\ell'@L(s)$ in $s$.
Then $p.\gcp(K).q\in\cW(s,p)$ because $\gcp(K).q\leq\gcp(K).r=L(s).i\leq L(s)$.

\item[$\subseteq$]
Let $mo$ be a match obligation in $s$, let $\ell@r\in mo$ and
consider a position $q$ with $q\leq r$.
We have to show that $p.q\in\{p.L(s)\}$ or there is a $K\in[\deriv(s,f)]_{\dep}$
with $p.q\in\cW(\lift(K),p.\gcp(K))$.
It suffices to distinguish two cases.
\begin{itemize}
\item
In the case $L(s)\neq r$,
then $\ell@r$ is a pair in some match obligation $mo$ in $\deriv(s,f)$.
Then there is an equivalence class $K$ such that $\ell@r$
is in some match obligation of $K$.
Then $r=\gcp(K).r'$ for some $r'$
and $\ell@r'$ is in the match obligation $\lift(mo')$ of the state $\lift(K)$.

We have to show that $p.q\in\cW(\lift(K),p.\gcp(K))$.
From $q\leq r$ and $r=\gcp(K).r'$ we get that $q=\gcp(K).r'.r''$
for some $r''$.
Then $p.q=p.\gcp(K).r'.r''$.
By definition of $\cW$ and from $\ell@r'$ being a match obligation in $\lift(K)$,
it follows that $p.q\in\cW(\lift(K),p.\gcp(K))$.

\item
In the case $r=L(s)$
then $p.q\leq p.L(s)$.
If $q=r=L(s)$ then $p.q=p.L(s)$,
which is in the singleton set $\{p.L(s)\}$.
Otherwise, $q<L(s)$.
Then there is an index $i$ such that $q\leq L(s).i$.
Since $p.q\in\cD(t)$ and $\hd(t[p.L(s)])=f$ it must be that
$i\leq\#f$.
Then by definition of $\deriv(s,f)$
there is a fresh match obligation $\ell@L(s).i\to\ell@L(s).i$ in $K$.
By definition of $\lift$ we have that $\gcp(K).r=L(s).i$ for some $r$
and $\ell@r$ is a match obligation in $\lift(K)$.
Then the proof obligation follows by
$p.q\leq p.L(s).i = p.\gcp(K).r\in\cW(\lift(K),\gcp(K))$.
\end{itemize}
\end{itemize}
\end{enumerate}
\end{proof}

\subsection{Proof for Lemma~\ref{lem:history}}
\history
\begin{proof}
The history of $(s_0,\epsilon)$ trivially respects $t$.
Consider a node $(s,p)$ whose history respects $t$,
and let $f=\hd(t[p.L(s)])$.
Consider a successor $(\lift(K),p.\gcp(K))\in\suc(s,p)$ for some $K\in[\deriv(s,f)]_{\dep}$.
We show that the history of $(\lift(K),p.\gcp(K))$ respects $t$ as well.

Following the definition of $\deriv$, we only look at the reduced match goals in $\lift(K)$.
By definition those are match goals $mo\to ma$ with some pair $f(t_1,\dots,t_n)@L(s)$.
The history of the unchanged goals respects $t$ by assumption and fresh match goals have no history.
Suppose that $mo'\to \ell@p$ is a reduced match goal.
Then $mo'=\reduce(mo,f,L(s))$ with $mo\to ma\in s$.
By definition of $\reduce$ we have that
\[mo'=\{\ell@q\in mo\mid q\neq L(s)\} \cup \{\ell[i]@L(s).i\mid\ell@L(s)\in mo\wedge 1\leq i\leq\#f\wedge\ell[i]\neq\dc\}.\]

For all unchanged pairs $\ell@q$ with $q\neq L(s)$
we do not have to prove anything.
If $\ell'@L(s)$ is a pair in $mo$ then
$\ell'[i]@L(s).i$ is a pair in $mo'$ for all $i$ with $\ell'[i]\neq\dc$.
Hence, $L(s)\in\cD(\ell)$, $\ell[L(s)]\not\leq r$
and $L(s)\not\leq r$ for all goals $\ell'@p\in mo'$.
So, $\hd(t[p.L(s)])=\hd(\ell[L(s)])$.
\end{proof}

\subsection{Proof of Lemma~\ref{lem:followthemo}}
\followthemo*
\begin{proof}
Suppose that $\hd(t[p.L(s)])=f$.
We distinguish two cases.
\begin{itemize}
\item
If $mo=\{f(\dc,\dots,\dc)@L(s)\}$ then
$\reduce(mo,f,L(s))=\emptyset$.
By construction $\ell@q\in\out(s,\hd(t[p.L(s)])$,
as needed to conclude.

\item
Otherwise, let $mo'=mo$ if $L(s)\notin\pos(mo)$ and
$mo'=\reduce(mo,f,L(s))$ if $L(s)\in\pos(mo)$.
Note that if $L(s)\in\pos(mo)$ then $\reduce(mo,f,L(s))$ is not empty.
Then $mo'\to\ell@q\in K$ for some $K\in[\deriv(s,f)]_{\dep}$.
By construction $\lift(mo')\to\ell@q'\in\lift(K)$ for some $q'$ such that
$q=\gcp(K).q'$.
Then $(\lift(K),p.\gcp(K))$ is the node that we are looking for. 
\end{itemize}
\end{proof}

\end{document}